\newtheorem{theorem}{Theorem}
\newtheorem{proposition}{Proposition}
\DeclareMathOperator{\LEs}{LE}
\begin{document}

\begin{frontmatter}

 \title{Study of irregular dynamics in an economic model: \\ attractor localization and {L}yapunov exponents}
 %and Topological Entropy of global attractor, Characteristics
 %\footnote{This work was done under the auspices of the Institute for Nonlinear Dynamical Inference at the International Center for Emerging Markets Research (http://icemr.ru/institute-for-nonlinear-dynamical-inference/).} }

\author[hse]{Tatyana~A.~Alexeeva\,}
%\author[kan,cefs]{William~A.~Barnett\,}
 \author[spbu,fin,ipm]{Nikolay~V.~Kuznetsov\corref{cor}}
 \ead{Corresponding author: nkuznetsov239@gmail.com}
 \author[spbu]{Timur~N.~Mokaev\,}

 \address[hse]{St. Petersburg School of Mathematics, Physics  and Computer Science, HSE University, \\
 194100 St. Petersburg, Kantemirovskaya ul., 3, Russia}
 %\address[kan]{Department of Economics, University of Kansas, Lawrence, KS 66045, USA}
 %\address[cefs]{Center for Financial Stability, New York, NY 10036, USA}
 \address[spbu]{Faculty of Mathematics and Mechanics,
 St. Petersburg State University, 198504 Peterhof,
 St. Petersburg, Russia}
 \address[fin]{Department of Mathematical Information Technology,
 University of Jyv\"{a}skyl\"{a},  40014 Jyv\"{a}skyl\"{a}, Finland}
 \address[ipm]{Institute for Problems in Mechanical Engineering RAS, 199178 St. Petersburg, V.O., Bolshoj pr., 61, Russia}

\begin{abstract}

Cyclicity and instability inherent in the economy can manifest themselves in irregular fluctuations, including chaotic ones, which significantly reduces the accuracy of forecasting the dynamics of the economic system in the long run.
We focus on an approach, associated with the identification of a deterministic endogenous mechanism of irregular fluctuations in the economy. 
Using of a mid-size firm model as an example, we demonstrate the use of effective analytical and numerical procedures for calculating the quantitative characteristics of its irregular limiting dynamics based on Lyapunov exponents, such as dimension and entropy. 
We use an analytical approach for localization of a global attractor and study limiting dynamics of the model.
We estimate the Lyapunov exponents and get the exact formula for the Lyapunov dimension of the global attractor of this model analytically. 
With the help of delayed feedback control (DFC), the possibility of transition from irregular limiting dynamics to regular periodic dynamics is shown to solve the problem of reliable forecasting.
At the same time, we demonstrate the complexity and ambiguity of applying numerical procedures to calculate the Lyapunov dimension along different trajectories of the global attractor, including unstable periodic orbits (UPOs).

\end{abstract}

\begin{keyword}
	Lyapunov exponents, Lyapunov dimension, chaos, unstable periodic orbit, absorbing set, mid-size firm model

 \end{keyword}

\end{frontmatter}

\section{Introduction}

Increasing uncertainty, unpredictability, and instability in the world, nature cataclysms, a series of economic crises, self-fulfilling expectations which give rise to bubbles and crashes, as well as rapid development and implementation of digital technologies in everyday life have posed a number of new challenges for scientists, governments, and policy makers: to study, understand and interpret
the behavior of complex dynamical systems, including socio-economic model \cite{Scheffer_etal21-2012,Battiston_et.al.22-2016,KindlebergerA-book-2015}.

An inherent component of observed economic processes is cyclicality, which is manifested through the occurrence of various types of fluctuations in the economic system under consideration.
In particular, regular fluctuations could be either periodic boom-bust phenomena associated with predictable changes in some elements of the economic system that reappear at fairly constant time intervals, or seasonal fluctuations that are permanent in nature. Regular and stable periodic oscillations lead to the predictable dynamics of the process’ model and are quite simple to describe mathematically. A number of straightforward quantitative measures, such as phase-frequency characteristics and amplitude, can be calculated for them.

However more offen, economic systems exhibit irregular (including chaotic) behavior. 
The role of irregular oscillatory dynamics for forecasting and stabilization of economic processes significantly depends on the source and nature of these fluctuations. On the one hand, irregular economic fluctuations could be the result of unusual events such as large bankruptcies, oil and currency shocks, floods, strikes, civil unrest, epidemics, etc. These events could be thought of as initiated by exogenous shocks. On the other hand, irregular fluctuations could be generated by endogenous mechanisms inherent in the very nature of economic systems.
Thus, there are two ways to examine of irregularity in the economy. First approach takes into account random processes that are considered in the model as exogenous shocks.  Second one is based on identification of a deterministic endogenous mechanism of occurrence of irregular fluctuations, which may also be chaotic.
These two approaches were developed in economics literature in parallel and generated a lot of discussion regarding the views on the sources of irregular fluctuations (see, e.g. \cite{Benhabib-2008-2016,AkhmetAF-JEBO-2014,BeaudryGP-AER-2020}). 

Since the 1970s, there has been keen interest in the study of deterministic chaotic dynamics in economic models within the framework of the second approach. This research was stimulated by the discovery of chaos in dynamical systems by Lorenz and Ueda \cite{Lorenz-1963,UedaAH-1973}.
Many famous economists (see, e.g. \cite{BenhabibN7-1979,BenhabibN-JET-1985,Day-AER-1982,Day2-1983,Grandmont-Ec-1985,BoldrinLM1-1986,BoldrinW-1990,DayS-JEBO-1987,BaumolB-JEP-1989,Medio6-1992,Sorger-JET-1992,Mitra-JET-1996,BrockHomEc-1997,BrockH9-1998,MitraS-1999,Rosser-Book-2000,BarnettSert10-2000,BenhabibSGU-AER-2002,Slobodyan-JM-2007,Benhabib-2008-2016}) have suggested numerous examples of economic models in which qualitatively and quantitatively reasonable irregular fluctuations might occur in purely deterministic settings.
For instance, the larger literature \cite{BenhabibN7-1979,BenhabibD-EL-1980,BenhabibD-JEDC-1982,Grandmont-Ec-1985,BoldrinLM1-1986,DeneckereP-JET-1986,BenhabibR-JET-1990,Sorger-JET-1992,Mitra-JET-1996} examines the endogenous cycles and irregular chaotic dynamics which could be generated by deterministic, equilibrium models of the economy.
The models often exhibit complex dynamics characterized by both chaotic behavior and instability. Such combination suggests a nonlinear dynamical system, somewhat unstable at the core, but effectively contained further out. 
The contribution of these models has been to demonstrate the compatibility of endogenous irregular fluctuations with equilibrium dynamics in economics.
At the same time, theoretical tools were developed for effective chaos control, which, by small fine-tuning the parameters of system, made it possible to stabilize selected orbits embedded in a chaotic attractor and nudge the dynamics toward a desired trajectory. Examples applications of these tools can be found in \cite{HolystHHW-1996,HolystU-2000,Kopel-1997,Kaas-1998,BalaMM-1998,BoccalettiGLMM-2000,MendesM-2005,ChenChen-2007,SalariehAl:3-2009}.
The reviewed literature shows the relevance of chaos for economic models and contributed to development of advanced mathematical tools for study of complex nonlinear dynamical systems in economics, which continues up to now. During the last few years, highly influential authors published a number of significant papers (see, e.g. \cite{SlobodyanW-AEJM-2012,Bella-CSF-2017,BellaMV-JET-2017,SandersFarmerG-SR-2018,Maliar-CERP-2018,BoccalettiLi-NJPh-2020,BellaLMV-CSF-2020,Barnett-2020,AchdouHLLM-RES-2021}). 
The studies of models with irregular dynamics have received a new impetus and spread into many subfields of economic theory. Especially, such models offer important contributions in macroeconomics, dynamical game theory, theory of rational inattention, finance, environmental economics, and industrial organization (for survey of the literature, see \cite{AnufrievRT-2018}).

To understand, describe and make measurable the properties of irregular dynamics it is important to calculate its quantitative characteristics. Indicators based on Lyapunov exponents, including such as entropy and dimension, naturally arise in economics \cite{DictionaryEc-2018}. In economic models these characteristics could be considered as indicators of irregular (primarily, chaotic) behavior, as the growth rate of the value of some economic variable (for instance, technology level), or as a measure of costs of making decisions by a rationally inattentive agent who acquires information about the values of alternatives through a limited-capacity channel (see, e.g. \cite{DechertG-JAE-1992,MontrucchioS-JMEc-1996,HommesBook12-2006,MatejkaMW-JET-2018}).
In this paradigm important results and arguments were presented which provide novel support for the idea that business cycles may be largely driven by endogenous deterministic cyclical forces (see,~e.g. \cite{BeaudryGP-AER-2020,AsanoKHFarmer-PNAS-2021,Galizia-QE-2021}).

There are two main approaches in studying this topic.
The first approach is based on the possibility of obtaining analytical results for low-dimensional nonlinear models (in the literature, two-dimensional dynamical systems are most often studied). The second one is based on the ability to study complex irregular dynamics using numerical procedures.
However, the possibility of obtaining reliable results using them is significantly limited due to the necessity of performing calculations only over finite time intervals, rounding-off errors in numerical methods, and the unbounded space of initial data sets \cite{LeonovKM-2015-EPJST,KuznetsovLMPS-2018,KuznetsovKKMD-2019-IFAC,KuznetsovMKK-2020,AlexeevaBKM-CSF-2020,AlexeevaBKM-IFAC-2020}. 
It should be noted that the sensitivity to small changes in the initial data, inherent in irregular (chaotic) dynamics, can cause significant forecasting errors. This, on the one hand, can explain some of the difficulties associated with forecasting behavior of the models, and on the other hand could be interpreted as unpredictability in real world problems (see, e.g. \cite{BeaudryGP-AER-2020}). Trajectories in models of such processes may be attracted not to a stationary point or a periodic cycle, but to an irregular invariant set, including chaotic attractor.
Additional complexity of the dynamics can be also associated with various unstable orbits embedded into the chaotic attractor of the dynamical system. Stabilization of unstable orbits makes it possible to improve the forecasting of the model dynamics \cite{AlexeevaBKM-IFAC-2020}.
Analytical methods allow overcoming these limitations at least for some low-dimensional models (see, e.g. \cite{LeonovKKK-2016-CNSCS,AlexeevaBKM-CSF-2020}) and are able to mitigate the influence of computer errors. Thus, this is capable of making reliable forecasts of model dynamics and of getting its exact qualitative and quantitative characteristics.

We continue the line of research on the limiting dynamics for mid-size firm model, which began in \cite{AlexeevaBKM-CSF-2020,AlexeevaBKM-IFAC-2020}, where we have obtained conditions for the global stability. 
In this paper we focus on a different approach, associated with the identification of deterministic endogenous mechanisms of irregular fluctuations in economic systems.
We use an analytical approach for localization of a global attractor and study limiting dynamics of the model.
We estimate the Lyapunov exponents and get the exact formula for the Lyapunov dimension of the global attractor of this model analytically. 
With the help of DFC, the possibility of transition from irregular limiting dynamics to regular periodic dynamics is shown to solve the problem of reliable forecasting.
At the same time, we demonstrate the complexity and ambiguity of applying numerical procedures to calculate the Lyapunov dimension along different trajectories of the global attractor, including UPOs.

\section{Problem statement}

For understanding and reliable predicting the behavior of 
 economic models in continuous time the study of its limit oscillations is an important task. 
This task could be solved by an analytical localization of the global attractor (whenever applicable) for the corresponding system of ODE, i.e., constructing a bounded closed positively invariant region (an absorbing set). 
On this attractor, along with the corresponding solution for the system
we obtain some estimates of irregular (including chaotic) dynamics. This allows us to calculate various quantitative characteristics based on the Lyapunov exponents such as the Lyapunov dimension of the attractor and entropy.

Consider the Sharovalov model proposed in \cite{ShapovalovKBA-2004} which describes the behavior of a mid-size firm

\begin{equation}
\begin{cases}
\begin{aligned}
&\dot x=-\sigma x+\delta y,\\
&\dot y=\mu x +\mu y-\beta xz,\\
&\dot z=-\gamma z+\alpha xy,
\end{aligned}
\label{SysShap6}
\end{cases}
\end{equation}
where coefficients $\alpha$, \, $\beta$, \, $\sigma$, \, $\delta$, \, $\mu$, \, $\gamma$ at variables $(x,\,y,\,z)$ are positive control parameters with the economic meaning. We define this model in terms of the differences between actual levels of the variables $X$, $Y$, and $Z$, denoted the growth of three main factors of production: the loan amount $X$, fixed capital $Y$ and the number of employees $Z$ (as an increase in human capital), and its potential (natural) levels $x_p$, $y_p$, and $z_p$ respectively\footnote{We assume that the potential (natural) levels of factors of production correspond to the production possibilities of a mid-size firm as a whole, reflecting its natural, technological, and institutional constraints.}\!. 
Thus, we consider the gap between the actual and potential levels of factors of production: $x = X - x_p$, $y= Y- y_p$, and $z= Z - z_p$, where $X$, $Y$, and $Z$ are nonnegative. 
Note that system \eqref{SysShap6} describes the behavior of a mid-size firm correctly when the global attractor or its absorbing set lays in the domain $x\ge -{x}_{p}$, $y\ge -{y}_{p}$, and $z\ge -{z}_{p}$.

System \eqref{SysShap6} can be reduced to a Lorenz-like system

\begin{equation}
\begin{cases}
\begin{aligned}
&\dot x= - c x + c y, \\
&\dot  y=r x + y - x z, \qquad \mbox{where} \,\, c = \frac{\sigma}{\mu} , r=\frac{\delta}{\sigma}, b = \frac{\gamma}{\mu},\\
&\dot z=-b z + xy, \end{aligned}
\label{SysLorenz3}
\end{cases}
\end{equation}
using the following coordinate transformation

\begin{equation}
(x, y, z) \rightarrow  \left(\frac{\mu}{\sqrt{\alpha \beta}} x,\, \frac{\mu \sigma}{\delta \sqrt{\alpha \beta}} y, \, \frac{\mu \sigma}{\delta \beta} z\right), \, t \rightarrow \frac{t}{\mu}.
\label{TrfShLor}
\end{equation}
System \eqref{SysLorenz3} in crucial respect differs from the classical Lorenz system \cite{Lorenz-1963} in the sign of the coefficient at $y$ in the second equation, which is 1 here and -1 in the Lorenz system.

Accordingly, the inverse transformation

\begin{equation}
(x, y, z) \rightarrow  \left(\frac{\sqrt{\alpha \beta}}{\mu} x, \frac{r \sqrt{\alpha \beta}}{\mu} y, \frac{r \beta}{\mu} z\right),\, t \rightarrow \mu t
\label{InvtrfLorSh}
\end{equation}
reduces system \eqref{SysLorenz3} to system \eqref{SysShap6} with coefficients $\sigma = c \mu, \delta = r c \mu, \gamma = b \mu$ \footnote{Transformations \eqref{TrfShLor} and \eqref{InvtrfLorSh} do not change the direction of time, which is essential for analysis of the Lyapunov dimension and Lyapunov exponents \cite{LeonovK-2015-AMC}.}.

In addition, system \eqref{SysShap6} with parameters satisfying the relations $\sigma^2 / (\sigma- \delta) = \mu $ and $ \delta <\sigma <\mu$ can be reduced to the well-known Chen system \cite{ChenU-1999}

\begin{equation}
\begin{cases}
\begin{aligned}
&\dot x= - d x + d y, \\
&\dot  y= (c - d) x + c y - x z, \qquad \mbox{with} \,\, d = \sigma , c=\frac{\sigma^2}{\sigma - \delta} = \mu, b = \gamma, \, d<c,\\
&\dot z= - b z + x y, \end{aligned}
\label{SysChen}
\end{cases}
\end{equation}
using coordinate substitutions

\begin{equation}
(x, y, z) \rightarrow  \left(\frac{1}{\sqrt{\alpha \beta}} x,\, \frac{\sigma}{\delta \sqrt{\alpha \beta}} y, \, \frac{\sigma}{\delta \beta} z\right).
\label{Trf:ShapChen}
\end{equation}

The possibility of reducing system \eqref{SysShap6}  to the Chen system \eqref{SysChen} under the above conditions shows the complexity of studying a mid-size firm model. The problem of analytical calculation of the dimension of the attractor for the Chen system remains an issue \cite{LeonovK-2015-AMC}.

It was shown in \cite{AlexeevaBKM-CSF-2020} that for system \eqref{SysLorenz3} the global absorbing set  $\mathcal{B}= \Omega_1 \bigcap \mathcal{B}_R$ can be constructed under conditions $2 < b < 2c$ (Fig.~\ref{fig:shap:abs-set}), where $\Omega_1 = \left\{(x,y,z) \in \mathbb{R}^3 ~|~  z \geq \frac{x^2}{2 c}\right\}$ is the parabolic cylinder, $\mathcal{B}_R =
 \left\{\!(x,y,z)\,\in \mathbb{R}^3 ~|~
  \frac{1}{2} \left[ B^2 x^2 - 2 \, B \,x \, y + y^2
    + \big(z - \big(r + (B^2 + B) c - B\big)\big)^2 \right] \leq \eta \! \right\}$ is the ellipsoid, $B=\tfrac{1}{2}\big(\tfrac{1}{c} + \tfrac{b}{2c}\big)$, and $\eta$ is a chosen parameter.

 \begin{figure}[ht]
  \centering
  \includegraphics[width=0.5\linewidth]{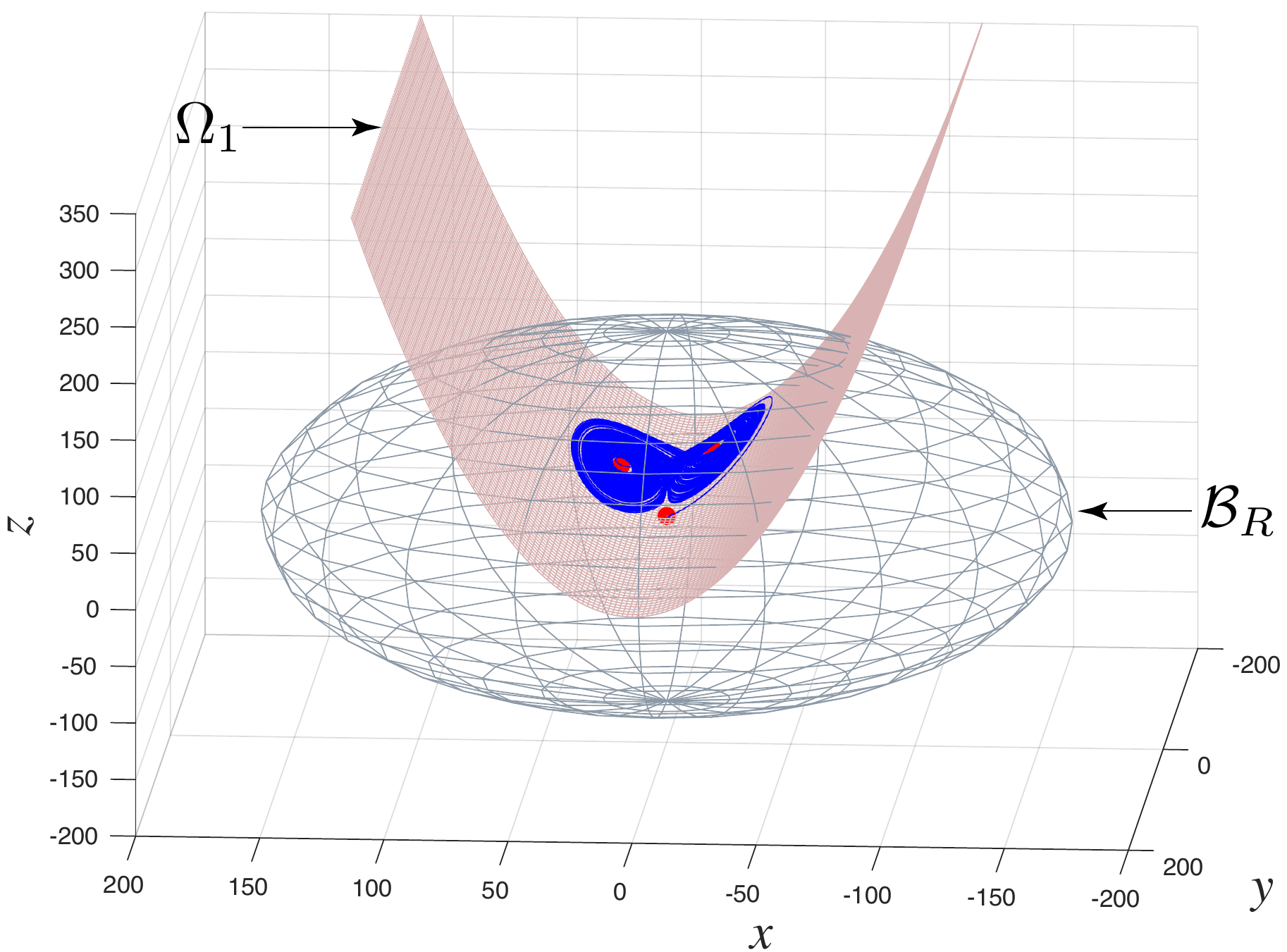}
  \caption{Analytical localization of the chaotic attractor of system~\eqref{SysLorenz3}
  with parameters set at $r = 51$, $b = 5.7$, $c =18.3$ by the global absorbing set $\mathcal{B}= \mathcal{B}_R \bigcap \Omega_1 $, where 
  $\mathcal{B}_R$  is the ellipsoid (gray), $\Omega_1$ is the parabolic cylinder (brown).}  
  \label{fig:shap:abs-set}
\end{figure}

The presence of an absorbing set implies the existence of a global attractor $\mathcal{A}_{glob}$, 
which contains all local self-excited 
and hidden attractors 
\cite{Kuznetsov-2020-TiSU,LeonovK-2013-IJBC,DudkowskiJKKLP-2016,LeonovVK-2010-DAN,KuznetsovKLV-2013,DancaFKC-2016-IJBC,KiselevaKL-2016-IFAC,Zhang-C-2019,ZhangF-ZQMXT-2021}
and a stationary set. 
In the interior of the global absorbing set model \eqref{SysShap6} can show both regular and irregular limit dynamics depending upon values of model's parameters \cite{AlexeevaBKM-CSF-2020}. 
In case of the global stability we observe regular dynamics when all trajectories of system \eqref{SysLorenz3} tend to the stationary set $\left\{S_0,\, S_{\pm}\right\}$, where $S_0=(0, 0, 0), \, S_{\pm}=\left(\pm \sqrt{  {b}(  {r}+1)} , \pm \sqrt{  {b} (  {r}+1)},   {r}+1 \right)$ 
  are equilibria of system \eqref{SysLorenz3}. As it was shown in \cite{AlexeevaBKM-CSF-2020}, the system is globally stable in the following parameter domain
\begin{equation}
\left\{
  \begin{aligned}
(  b +1)\left(  \frac{b}{c} -1\right) <   r < \left(  \frac{b}{c} +  1\right)(  b -1) , \\
    2 < b < 2 c . \\
    \end{aligned}
\right. 
\label{CondStabFin_AbsSet2} 
\end{equation}
Thus, in \cite{AlexeevaBKM-CSF-2020} the regular dynamics of system \eqref{SysLorenz3} was studied and the conditions of global stability were obtained.

On the other hand, if condition \eqref{CondStabFin_AbsSet2} is violated, the system exhibits irregular behavior, at which a chaotic attractor can be reveal. As an example, Shapovalov et~al. \cite{ShapovalovKBA-2004,ShapovalKaz-2015}, and Gurina and Dorofeev \cite{GurinaD-2010} show that system \eqref{SysShap6} exhibits chaotic behavior for some values of parameters.

Localization of global attractor and furthest calculation of the limit values of the finite-time Lyapunov exponents and the finite-time Lyapunov dimension along various trjectories of this attractor are nontrivial tasks.
While trivial attractors (stable equilibrium) can be easily found analytically or numerically, the search for periodic and chaotic attractors can be a challenging problem. For numerical localization of the attractor, one needs to choose an initial point in its basin of attraction. After a transient process, a trajectory, starting in a neighborhood of an unstable equilibrium, is attracted to the state of oscillation and then traces it.  Next, the computations are being performed for a grid of points in vicinity of the state of oscillation to explore the basin of attraction and improve the visualization of the attractor.

However, for an arbitrary system possessing a transient chaotic set, the time of transient process depends strongly on the choice of initial data in the phase space and also on the parameters of numerical solvers to integrate a trajectory (e.g., order of the method, step of integration, relative and absolute tolerances). This complicates the task of distinguishing a transient chaotic set from a sustained chaotic set (attractor) in numerical experiments. 
Since the ``lifetime'' of a transient chaotic process can be extremely long and in view of the limitations of reliable integration of chaotic ODEs, even long-time numerical computation of the finite-time Lyapunov exponents and the finite-time Lyapunov dimension does not guarantee a relevant approximation of the Lyapunov exponents and the Lyapunov dimension \cite{KuznetsovLMPS-2018,KuznetsovMKK-2020,AlexeevaBKM-IFAC-2020}.

In this paper, we obtain analytical formula of the exact Lyapunov dimension for global attractor of system \eqref{SysLorenz3}. We demonstrate difficulties in numerical computation of the finite-time Lyapunov exponents and the finite-time Lyapunov dimension along one randomly chosen trajectory over a long time interval which are caused by finite precision numerical integration of ODE, UPOs embedded into the attractor, and choice of various initial data. This confirms the significance of the deduсed analytical formula for the Lyapunov dimension.

\section{Analytical estimation of finite-time Lyapunov dimension and exact Lyapunov dimension}

In this section, we give the main definitions and explanations. Some definitions, proofs and technical parts used from now onwards in this section are summarised in Appendix.

Rewrite system \eqref{SysLorenz3} in a form
\begin{equation} \label{eq:ode}
 \dot u = f(u), \quad f : \mathcal{B} \subseteq \mathbb{R}^3 \to \mathbb{R}^3,
\end{equation}
where $f$ is a continuously differentiable vector-function.
Let $u(t,u_0)$ be any solution of~\eqref{eq:ode} such that $u(0,u_0)=u_0 \in \mathcal{B}$
exists for $t \in [0,\infty)$, it is unique and stays in the absorbing set $\mathcal{B}$. For system \eqref{eq:ode} the evolutionary operator $\varphi^t(u_0) = u(t,u_0)$
defines a smooth \emph{dynamical system}  $\{\varphi^t\}_{t\geq0}$
in the phase space $(U, ||~\cdot~||)$:
$\big(\{\varphi^t\}_{t\geq0},(U \subseteq \mathbb{R}^3,||\cdot||) \big)$,
with Euclidean norm.

% Let $\{\varphi^t\}_{t\geq0}$ denote a smooth \emph{dynamical system} with continuous time, and let set $\mathcal{A}$ be its compact invariant set.
We consider fundamental matrix $D\varphi^t(u)=\big(y^1(t),y^2(t),y^3(t)\big)$, $D\varphi^0(u) = I$, with cocycle property, where $\{y^i(t)\}_{i=1}^{3}$ are linearly independent solutions of the linearized system, $I$ is the unit $3\times 3$ matrix.
 The \emph{finite-time local Lyapunov dimension} \cite{Kuznetsov-2016-PLA,KuznetsovLMPS-2018}
can be defined via an analog of the \emph{Kaplan-Yorke formula}
with respect to the set of ordered \emph{finite-time Lyapunov exponents}\footnote{see Appendix.} $\{\LEs_i(D\varphi^t(u))=\LEs_i(t,u)\}_{i=1}^3$ at the point $u$:
\begin{equation}\label{lftKY}
   \dim_{\rm L}(t, u) = d^{\rm KY}(\{\LEs_i(t,u)\}_{i=1}^{3}) =
    j(t,u) + \tfrac{\LEs_1\!(t,u) + \cdots + \LEs_{j(,u)}\!(t,u)
   }{|\LEs_{j(t,u)\!+\!1}(t,u)|},
\end{equation}
where $j(t,u) = \max\{m: \sum_{i=1}^{m}\LEs_i(t,u) \geq 0\}$,
$\dim_{\rm L}(t, u)=3$ for $j(t,u)=3$, or $t=0$.
If $j(t,u) \in \{1,2\}$, then
$\sum_{i=1}^{j(t,u)}\LEs_i(t,u) \geq 0$, $\LEs_{j(t,u)+1}(t,u) <0$
and
%$s(t,u)=(\dim_{\rm L}(t, u)-j(t,u)) \in [0,1)$ and
\begin{equation}\label{dimsum}
  \dim_{\rm L}(t, u)=j(t,u)+s(t,u) \quad : \quad
  %\sum_d\{\LEs_i(t,u)\}_1^n=
  \sum_{i=1}^{j(t,u)}\LEs_i(t,u)+s(t,u)\LEs_{j(t,u)+1}(t,u)=0.
\end{equation}
The \emph{finite-time Lyapunov dimension}
%of dynamical system with respect to the set $\mathcal{A}$
is defined as:
\begin{equation}\label{DOmaptmax}
  \dim_{\rm L}(t, \mathcal{A}) = \sup\limits_{u \in \mathcal{A}} \dim_{\rm L}(t,u),
\end{equation}
where $\mathcal{A}$ is a compact invariant set.

The \emph{Douady--Oesterl\'{e} theorem} \cite{DouadyO-1980} implies that
for any fixed $t > 0$
the finite-time Lyapunov dimension on set $\mathcal{A}$,
defined by \eqref{DOmaptmax},
is an upper estimate of the Hausdorff dimension:
$\dim_{\rm H} \mathcal{A} \leq \dim_{\rm L}(t, \mathcal{A})$.
By the Horn inequality \cite[p.50]{KuznetsovR-2021-book},
%\footnote{$\sum_1^m\LEs_i(AB) \leq \sum_1^m\LEs_i(A)+\sum_1^m\LEs_i(B)$,$\sum_1^m\LEs_i(t)$},
cocycle property,
and invariance of $\mathcal{A}$ we have\footnote{see Appendix.} 
%$\sum_d(\{\LEs_i(\cdot)\}_1^3)\!=\!\sum_{i=1}^{\lfloor d \rfloor}\LEs_i(\cdot)
%\!+\!(d\!-\!\lfloor d \rfloor)\!\LEs_{\lfloor d \rfloor+1}(\cdot)$.
%$\sum_d(\{\LEs_i(AB)\}_1^3) \leq \sum_d(\{\LEs_i(A)\}_1^3)\!
%+\!\sum_d(\{\LEs_i(B)\}_1^3), d\!\leq\!3$. Then
%Thus, $\sup_{u\in \mathcal{A}}\big(\sum_1^{j}\LEs_i(kt,u) + s\LEs_{j+1}(kt,u)\big)
%\leq \sup_{u\in \mathcal{A}}\big(\sum_1^{j}\LEs_i(t,u)+s\LEs_{j+1}(t,u)\big)$.
$\sup_{u\in \mathcal{A}}\big(\sum_1^{j}\LEs_i(kt,u) + s\LEs_{j+1}(kt,u)\big)\leq 
\sup_{u\in \mathcal{A}}\big(\sum_1^{j}\LEs_i(t,u)+s\LEs_{j+1}(t,u)\big)$
for $j \in \{1,2\}$, $s \in [0,1]$ and any integer $k>0 $.
%Since the expression in \eqref{dimsum} can be represented as
%$(1-s)\sum_{i=1}^{j}\LEs_i(t,u)+s\sum_{i=1}^{j+1}\LEs_i(t,u)$
The infimum is achieved at infinity, otherwise
for $d: 0<\dim_{\rm}(T,\mathcal{A}) < d <\liminf_{k \to +\infty}\dim_{\rm}(kT,\mathcal{A})$
from \eqref{dimsum} and the Horn inequality one gets a contradiction:
\(
0<\liminf\limits_{k \to +\infty}
\sup\limits_{u \in \mathcal{A}}\sum_1^{d}\LEs_i(D\varphi^{kT}(u)) \leq
\liminf\limits_{k \to +\infty}\sup\limits_{u \in \mathcal{A}}\sum_1^{d}\LEs_i(D\varphi^{T}(u))<0
\).
Thus, the best estimation \eqref{DOmaptmax} takes
the form \cite{Kuznetsov-2016-PLA}
\begin{equation}\label{defLD}
   \dim_{\rm L} \mathcal{A}
   = \inf_{t > 0}\sup\limits_{u \in \mathcal{A}} \dim_{\rm L}(t, u)
   = \liminf_{t \to +\infty}\sup\limits_{u \in \mathcal{A}} \dim_{\rm L}(t, u)
\end{equation}
and is called the \emph{Lyapunov dimension}. 

If the supremum of finite-time local Lyapunov dimensions on set $\mathcal{A}$
is achieved at such an equilibrium point
$u_{eq} \equiv \varphi^{t}(u_{eq}) \in \mathcal{A}$:
$\dim_{\rm L}\mathcal{A}=\dim_{\rm L}u_{eq}$,
then the Lyapunov dimension can be represented in analytical form
and it is called  \emph{exact Lyapunov dimension} in \cite{DoeringGHN-1987}. A conjecture on the Lyapunov dimension of self-excited attractor 
\cite{Kuznetsov-2016-PLA,KuznetsovLMPS-2018,KuznetsovMKK-2020} is that for a typical system, the Lyapunov dimension of a self-excited attractor does not exceed the Lyapunov dimension of one of the unstable equilibria, the unstable manifold of which intersects with the basin of attraction and visualizes the attractor.

In a general case, analytical computation of the Lyapunov exponents and Lyapunov dimension is hardly possible.
However, they can be estimated by the eigenvalues of the symmetrized Jacobian matrix \cite{DouadyO-1980,Smith-1986}. The Kaplan–Yorke formula with respect to the ordered set of eigenvalues $\nu_i(J(u))=\nu_i(u)$, $\nu_1(u) \ge \nu_2(u) \ge \nu_3(u)$, $i=1,2,3,$ of the symmetrized Jacobian matrix $\frac{1}{2} (J(u)+J(u)^*)$,  $J(u)=Df(u)$ \cite{Kuznetsov-2016-PLA} gives an upper estimation of the Lyapunov dimension of an attractor~$\mathcal{A}$: 
\begin{equation}\label{dimestviaJ}
 \dim_{\rm L}\mathcal{A}=
 \inf_{t > 0}\sup\limits_{u \in \mathcal{A}}
 d^{\rm KY}(\{ \LEs_i(t,u)\}_{i=1}^{3})
 \leq
 \sup_{u \in \mathcal{A}}d^{\rm KY}\big(\{\nu_i(u)\}_{i=1}^3\big).\\
\end{equation}
%$\dim_{\rm L}\mathcal{A} \le \sup_{u\in \mathcal{A}} d^{\rm KY}(\{\nu_i(t,u)\}_{i=1}^{3})$. 
Generally speaking, one cannot get the same values of $\{\nu_i(u)\}_{i=1}^{3}$ at different points $u$; thus, the supremum of $d^{\rm KY}(\{\nu_i(u)\}_{i=1}^{3})$ on $\mathcal{A}$ has to be computed. To obtain estimate \eqref{dimestviaJ}, it is not necessary to integrate the solutions of the system; however, the analytical estimation of $\{\nu_i(u)\}_{i=1}^3$
on the attractor may be a challenging task.
Another approach is based on \emph{the Leonov method} of analytical estimation of the Lyapunov dimension\footnote{see Appendix.}.
The inequality $
   \dim_{\rm H}\mathcal{A} \leq
    \dim_{\rm L}\mathcal{A}
   < j+s$ holds, %\footnote{see Appendix, Theorem~\ref{thm:LD-estimate-V}.},
if 
 \begin{equation}\label{ineq:weilSVct}
  \sup_{u \in \mathcal{A}} \big(
  \nu_1 (u,S) + \cdots + \nu_j (u,S)
  + s\nu_{j+1}(u,S) + \dot{V}(u)
  \big) < 0,
\end{equation}
where $\dot{V} (u) = ({\rm grad}(V))^{*}f(u)$, $V: U \subseteq \mathbb{R}^3 \to \mathbb{R}^1$ is a differentiable scalar function, $S$ is a nonsingular $3\times 3$ matrix, $\nu_i(u,S)=\nu_i(SJ(u)S^{-1})$ is the ordered set of eigenvalues $\nu_1(u,S) \ge \nu_2(u,S) \ge \nu_3(u,S)$, $i=1,2,3,$ of the symmetrized Jacobian matrix $\frac{1}{2} (SJ(u)S^{-1}+(SJ(u)S^{-1})^*)$, $j \in \{1, 2\}$ is an integer number, and $s \in [0,1]$ is a real number. 

\section{Main result}

Using an effective analytical approach, proposed by Leonov \cite{Leonov-1991-Vest,Kuznetsov-2016-PLA}, which is based on a combination of the Douady-Oesterl\'{e} approach with the direct Lyapunov method we estimate the Lyapunov exponents and obtain the Lyapunov dimension for the global attractor in system \eqref{SysLorenz3}.

\begin{theorem}\label{theorem:shapovalov:stability}
If for parameters of system \eqref{SysLorenz3} the following relations hold

\begin{equation}
     2 < b < 2 c,  \\
   \label{Cond:AbsSet}
\end{equation}
\begin{equation}
     r > \left(  \frac{b}{c} +  1\right)(  b -1) , 
   \label{Cond:Dim-r}
\end{equation}
\begin{equation}
  (b+1)\bigl[(b-2)(b^2+6bc-3c^2+b)+c(2c-b)\bigr] - c\Bigl(b^2+b-c(8-b)\Bigr)r \le 0,
   \label{Cond:Dim-reg}
\end{equation}
then 

\begin{equation}
\dim_{\rm L}\mathcal{A}_{glob}=3-\frac{2(b+c-1)}{c-1+\sqrt{(c+1)^2+4cr}}.
   \label{DimL}
\end{equation}

\end{theorem}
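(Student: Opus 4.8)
The plan is to establish \eqref{DimL} by sandwiching $\dim_{\rm L}\mathcal{A}_{glob}$ between a lower bound coming from the zero equilibrium $S_0=(0,0,0)\in\mathcal{A}_{glob}$ and an upper bound coming from the Leonov estimate \eqref{ineq:weilSVct}. The key observation is that the right-hand side of \eqref{DimL} is precisely the Kaplan--Yorke value $d^{\rm KY}$ of the (real) eigenvalues of $J(u)=Df(u)$ at $S_0$, so the whole argument reduces to showing that the supremum of the finite-time local Lyapunov dimension over the attractor is attained at $S_0$ and equals this value.

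First I would record $J(u)=\left(\begin{smallmatrix}-c & c & 0\\ r-z & 1 & -x\\ y & x & -b\end{smallmatrix}\right)$, with $\Tr J = 1-b-c<0$, and the stationary set $\{S_0,S_\pm\}$. At $S_0$ the eigenvalues of $J$ are $\lambda_3=-b$ and $\lambda_{1,2}=\tfrac12\bigl(1-c\pm\Delta\bigr)$ with $\Delta=\sqrt{(c+1)^2+4cr}$. A direct computation shows that \eqref{Cond:Dim-r} is equivalent to $\lambda_1+(-b)\ge 0$ and, since $c>1$ under \eqref{Cond:AbsSet}, it also forces $-b>\lambda_2$; hence the ordered eigenvalues at $S_0$ are $\lambda_1>-b>\lambda_2$ with $\lambda_1-b+\lambda_2=\Tr J<0$, so $j=2$ and \eqref{dimsum} yields $d^{\rm KY}=2+\tfrac{\lambda_1-b}{-\lambda_2}=3-\tfrac{2(b+c-1)}{c-1+\Delta}$. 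Because $S_0\in\mathcal{A}_{glob}$ and the limiting finite-time Lyapunov exponents at an equilibrium coincide with the real parts of the eigenvalues of the Jacobian there, \eqref{defLD} gives the lower bound $\dim_{\rm L}\mathcal{A}_{glob}\ge 3-\tfrac{2(b+c-1)}{c-1+\Delta}$.

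For the upper bound I would apply Leonov's criterion \eqref{ineq:weilSVct} with $j=2$. Using that the trace is invariant under conjugation and symmetrization, I would rewrite $\nu_1(u,S)+\nu_2(u,S)+s\,\nu_3(u,S)=\Tr J-(1-s)\nu_3(u,S)$, reducing the task to a sharp lower estimate of the smallest symmetrized eigenvalue $\nu_3(u,S)$. The plan is to pick a diagonal $S$ that symmetrizes the $x$--$y$ coupling of $J$ and a scalar function $V(u)$ (a Lyapunov-type function combining a quadratic in $(x,y)$ with a term in $z$) so that the indefinite, state-dependent contributions to $\nu_3(u,S)$ are cancelled by $\dot V(u)=({\rm grad}(V))^{*}f(u)$. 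The free parameters of $S$, $V$ and the number $s$ are then tuned so that the left-hand side of \eqref{ineq:weilSVct} attains its supremum over $\mathcal{A}_{glob}$ at $S_0$ and vanishes there exactly when $s=s^{*}$, with $2+s^{*}$ equal to the right-hand side of \eqref{DimL}. For every $s>s^{*}$ the supremum is then strictly negative, so $\dim_{\rm L}\mathcal{A}_{glob}<2+s$, and letting $s\downarrow s^{*}$ gives the matching upper bound.

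The hard part will be this upper-bound step: choosing the pair $(S,V)$ and verifying that the constant left over after the cancellation is nonpositive \emph{precisely} under the hypotheses. I expect condition \eqref{Cond:Dim-reg}, which is linear in $r$, to be exactly the algebraic inequality guaranteeing that the supremum in \eqref{ineq:weilSVct} sits at $S_0$ rather than at $S_\pm$ or at some interior point of $\mathcal{A}_{glob}$, i.e.\ it is what makes the Leonov estimate sharp; \eqref{Cond:AbsSet} secures the compact invariant absorbing set on which the supremum is taken, and \eqref{Cond:Dim-r} fixes the ordering and $j=2$ at $S_0$. A recurring technical nuisance is the sign of the $+y$ term in the second equation of \eqref{SysLorenz3} (opposite to the classical Lorenz system), which reverses several of the relevant inequalities, so Leonov's Lorenz computation must be re-derived rather than quoted. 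Once the two bounds coincide, the supremum of the local Lyapunov dimension is realized at the equilibrium $S_0$, so \eqref{DimL} is the \emph{exact} Lyapunov dimension in the sense of \cite{DoeringGHN-1987}.
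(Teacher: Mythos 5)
Your overall strategy is the same as the paper's: bound $\dim_{\rm L}\mathcal{A}_{glob}$ from below by the Kaplan--Yorke value at $S_0$ and from above by Leonov's criterion \eqref{ineq:weilSVct} with $j=2$ and a tuned pair $(S,V)$. Your lower-bound half is correct and in fact more explicit than the paper, which leaves it implicit: the eigenvalues of $J(S_0)$ are $-b$ and $\tfrac{1}{2}\bigl(1-c\pm\sqrt{(c+1)^2+4cr}\bigr)$, condition \eqref{Cond:Dim-r} is exactly $\lambda_1-b\ge 0$, and the resulting $d^{\rm KY}$ equals the right-hand side of \eqref{DimL}.

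The gap is that the upper bound --- which is the entire content of the paper's proof --- is left as a plan, and the two concrete choices you do commit to would not work as stated. First, a \emph{diagonal} $S$ does not symmetrize this system usefully: because the $(2,2)$ entry of $J$ is $+1$ (not $-1$ as in Lorenz), one needs the shear matrix \eqref{MatrixS} with off-diagonal entry $-\tfrac{b+1}{c}$ to make $-b$ an exact eigenvalue of $\tfrac12\bigl(SJS^{-1}+(SJS^{-1})^*\bigr)$ and to annihilate the $(2,3)$ coupling, which is what produces the clean expression \eqref{EigValMS}. Second, a quadratic $V$ is not enough: the paper's auxiliary function $\theta$ contains $x^4$ and $x^2z$ terms, and the key inequality \eqref{Wtheta} is only required (and only provable) on the parabolic cylinder $z\ge\tfrac{x^2}{2c}$ coming from the absorbing set --- this is where \eqref{Cond:AbsSet} enters the estimate substantively, not merely as a compactness hypothesis, and your proposal never invokes this constraint. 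Finally, \eqref{Cond:Dim-reg} does not arise as a comparison of the supremum at $S_0$ versus $S_{\pm}$; it is the solvability condition $R(b,c,r,P)-L(b,c,r,P)\ge 0$ in \eqref{Inequality:R-L} for the existence of admissible free parameters $P,Q_1,Q_2$ in $\theta$. Without these constructions the upper bound, and hence the equality \eqref{DimL}, is not established.
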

{}
\begin{proof}
Consider system \eqref{SysLorenz3} with the Jacobian matrix 

\begin{equation}
J=\begin{pmatrix} -  c &  c &0\\
 r- z & 1 &- x\\
y & x& -  b\end{pmatrix}
\label{JacLor} 
\end{equation}
under the conditions \eqref{Cond:AbsSet} and \eqref{Cond:Dim-r}. We apply the transformation \eqref{TrfShLor} with a nonsingular matrix

\begin{equation}
 S=\begin{pmatrix}\frac{-1}{a} &0 &0\\
-\frac{  b  + 1}{  c} &1 &0\\
0& 0& 1\end{pmatrix}
\label{MatrixS}
\end{equation}
to this system, where $a=\frac{c}{\sqrt{\left(1+  b\right)\left(  c-  b\right)+ r   c}}$.
Then the symmetrized Jacobian matrix of this system $\frac{1}{2}\left(SJS^{-1}+(SJS^{-1})^*\right)$\footnote{
Symbol $^*$ denotes the transposition of matrix.
} has the following eigenvalues

\begin{equation}
\lambda_2=-  b, \, \lambda_{1,3}=-\frac{  c -1}{2}\pm \frac{1}{2}\left( (2  b +1 -   c)^2 + a^{2}\left(\frac{  b +1}{  c}x+y\right)^2 + \left(a z-\frac{2  b}{a}\right)^2\right)^\frac{1}{2}.
\label{EigValMS}
\end{equation}
The inequalities

\begin{equation}
2(\lambda_j - \lambda_{j+1}) \ge -(-1)^j (2 b +1- c)  + | 2b + 1 - c|\ge 0,\quad j={1,2},
\end{equation}
imply $\lambda_1\ge\lambda_2\ge\lambda_3$. From \eqref{EigValMS} following \cite{Leonov-1991-Vest} we get the ratio

\begin{equation}
2(\lambda_1 + \lambda_2  + s\lambda_3)=-(s+1)(c -1) - 2b  + (1-s) \left( (2  b +1 -   c)^2 + a^{2}\left(\frac{  b +1}{c}x+y\right)^2 + \left(a z-\frac{2  c}{a}\right)^2\right)^\frac{1}{2},
\label{SumLambda2}
\end{equation}
where $s \in [0,1]$ is a real number.
Using the famous inequality $\sqrt{k+l} \le \sqrt{k} + \frac{l}{2\sqrt{k}}$, $\forall k>0, \, l\ge 0$, we obtain an estimate

\begin{equation}
\begin{aligned}
&2(\lambda_1 + \lambda_2  + s\lambda_3)\le -(  c -1 + 2  b) - s(  c - 1) + (1-s)\left[ (  c+1)^2+4  c r\right]^{\frac{1}{2}}+\\
&+\frac{2(1-s)}{\left[ (  c+1)^2+4  c r\right]^\frac{1}{2}}
\left[-  c z+\frac{a^2 z^2}{4} +\frac{a^2}{4}
\left(\frac{b + 1}{c}x+y\right)^2\right]. \end{aligned}
\label{ESumLambda2}
\end{equation}

We introduce the function $V(x,y,z)=\frac{\theta(x,y,z)}{\left[ (c+1)^2+4  c r\right]^{\frac{1}{2}}}$, where

\begin{equation}
\theta(x, y, z) = a^2 Q_0 x^2 + a^2(- c Q_1 + Q_2) y^2 + a^2Q_2 z^2 + \frac{a^2}{4c} Q_1 x^4 - a^2 Q_1 x^2 z - a^2 P Q_1 xy -  \frac{c}{b}z,
\end{equation}
$P$ and $Q_i \, (i=\overline{0,2})$ are some positive real parameters.
% \begin{equation}
% \theta(x, y, z) = a^2 q_0 x^2 + a^2(- c q_1 + q_2) y^2 + a^2q_2 z^2 + \frac{a^2}{4c} q_1 x^4 - a^2 q_1 x^2 z - a^2 p q_1 xy -  \frac{c}{b}z,
% \end{equation}
% $p$ and $q_i \, (i=\overline{1,3})$ are some positive real parameters.
Then

\begin{equation}
\begin{aligned}
&2(\lambda_1 + \lambda_2  + s\lambda_3)+2\dot V \le -(  c -1 + 2  b) - s(  c - 1) + (1-s)\left[ (  c+1)^2+4  c r\right]^{\frac{1}{2}}+\\
&+\frac{2(1-s)}{\left[ (  c+1)^2+4  c r\right]^\frac{1}{2}}\left[W(x,y,z)+\dot \theta\right],\end{aligned}
\label{ESumLambda2V}
\end{equation}
where $W(x,y,z)=-  c z+\frac{a^2 z^2}{4} +\frac{a^2}{4}\left(\frac{ b + 1}{c}x+y\right)^2$. Choose the parameters $P$ and $Q_i \, (i=\overline{0,2})$ of the function $\theta(x, y, z)$  such that 
\begin{equation}
F:=W(x,y,z)+\dot \theta \le 0, \qquad \forall x,y,z\ge \frac{x^2}{2c}. \,\\
\label{Wtheta}
\end{equation}
Substituting $W(x,y,z)$ and $\dot{\theta}$ in \eqref{Wtheta}, we get

\begin{equation}
F=A_0 z^2 + A_1 x^2 + A_2 xy + A_3 y^2,  
\label{F:Qform}
\end{equation}
where 
\begin{equation}
  \begin{aligned}
 & A_0 = a^2\Biggl(2c\left(b+P\right)Q_1-2bQ_2 + \frac{1}{4}\Biggr),\\
 & A_1 = a^2\Biggl(\frac{(b+1)^2}{4c^2} -r P Q_1\Biggr),\\
 & A_2 = a^2\Bigl[\Bigl((c-1)P-2c\Bigr)Q_1+2rQ_2 + \frac{b+1}{2c} - \frac{c}{ba^2}\Bigr],\\
 & A_3 = a^2\Bigl(\frac{1}{4}+2Q_2-c(2+P)Q_1\Bigr).
   \end{aligned}
\label{QformPar}
\end{equation}
Then
\begin{equation}
\left.
  \begin{aligned}
 & A_0 \le 0\\
 & A_3 \le 0\\
 & 4A_1A_3 - A_2^2 \ge 0  
  \end{aligned}
\right\}\Rightarrow F\le 0,\quad \forall x,y,z\ge \frac{x^2}{2c},
\label{Main-condition}
\end{equation}

\begin{equation}
\Leftrightarrow\left\{
  \begin{aligned}
 & Q_1 \le \frac{b}{c(b+P)} Q_2 - \frac{1}{8c(b+P)}, \\
 & Q_1 \ge \frac{2}{c(2+P)} Q_2 + \frac{1}{4c(2+P)}, \\
 & Q_1 \ge \frac{2}{2c+P} Q_2 + \frac{(b+c+1)^2 b a^2 - 4 c^3}{4a^2bc^2(r+1)(2c+P)}.
  \end{aligned}
\right.
\label{Mainsys}
\end{equation}
Since RHS of the second inequality in \eqref{Mainsys} is positive, we obtain

\begin{equation}
\left\{
  \begin{aligned}
 &\frac{b}{c(b+P)} Q_2 - \frac{1}{8c(b+P)} - \left(\frac{2}{c(2+P)} Q_2 + \frac{1}{4c(2+P)}\right)\ge 0, \\
 & \frac{b}{c(b+P)} Q_2 - \frac{1}{8c(b+P)} -\left(\frac{2}{2c+P} Q_2 + \frac{(b+c+1)^2 b a^2 - 4 c^3}{4a^2bc^2(r+1)(2c+P)}\right)\ge 0,
  \end{aligned}
\right.
\label{Def-R-L1-L2-1}
\end{equation}

\begin{equation}
\Leftrightarrow\left\{
  \begin{aligned}
 &\frac{(b-2)P}{c(b+P)(2+P)} Q_2 - \frac{3P+2b+2}{8c(b+P)(2+P)}\ge 0, \\
 & -\frac{(2c-b)P}{c(b+P)(2c+P)} Q_2 + \\
 &\frac{\Bigl(c(8c-b) r - 2b^3-4(3c+1)b^2-(2+13c-6c^2)b+8c^2\Bigr)P+}{8bc^2(b+P)(2c+P)(r+1)}\\
 &\frac{+6bc^2 r -2b(b+1)(b^2+b+6bc-3c^2)}{}\ge 0. 
  \end{aligned}
\right.
\label{Def-R-L1-L2-2}
\end{equation}
It follows from condition \eqref{Cond:AbsSet} that the coefficient at $Q_2$ in the first inequality of \eqref{Def-R-L1-L2-2} is positive and the coefficient at $Q_2$ in the second inequality of \eqref{Def-R-L1-L2-2} is negative. Hence, we can reduce \eqref{Def-R-L1-L2-2} to the following inequalities
\begin{equation}
L(b,c,r,P)\le Q_2\le R(b,c,r,P),
\label{Inequality-RL}
\end{equation}
where $L(b,c,r,P)=\frac{3P+2b+2}{8P(b-2)}>0$,\\
 $R(b,c,r,P)=
\frac{\Bigl(c(8c-b) r - 2b^3-4(3c+1)b^2-(2+13c-6c^2)b+8c^2\Bigr)P+6bc^2 r -2b(b+1)(b^2+b+6bc-3c^2)}{8bc(2c-b)(r+1)P}$.\\
Inequalities \eqref{Inequality-RL} mean that a positive $Q_2$ exists such
that
\begin{equation}
R(b,c,r,P)-L(b,c,r,P)=-\frac{(b+P)(k_1 r+k_0)}{4bc(b-2)(2c-b)(r+1)P}\ge0,
\label{Inequality:R-L}
\end{equation}
where $k_1=-c\Bigl(b^2+b-c(8-b)\Bigr)$, $k_0=(b+1)\bigl[(b-2)(b^2+6bc-3c^2+b)+c(2c-b)\bigr]$.
Since the denominator of fraction \eqref{Inequality:R-L} is positive, we obtain required condition \eqref{Cond:Dim-reg}

\begin{equation}
k_1 r+k_0\le 0.
\label{MainInequality}\\
\end{equation}
This completes the proof.
\end{proof}

We obtain a formula of the exact Lyapunov dimension of the global attractor for certain region of the parameters $(b,\, c)$ (Fig.~\ref{fig:reg-b-c}) of system \eqref{SysLorenz3}. The same approach allows one to estimate of the topological entropy of the global attractor \cite{AdlerKA-1965,PogromskyM-2011,KuznetsovKKMD-2019-IFAC,KuznetsovR-2021-book}.

\begin{figure}[ht]
  \centering
  \includegraphics[width=0.40\linewidth]{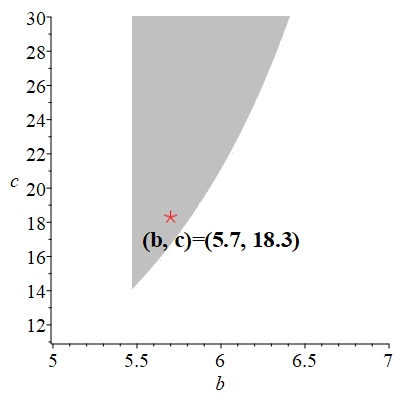}
  \caption{Parameters of system~\eqref{SysLorenz3} complying with the conditions \eqref{Cond:AbsSet} and \eqref{Cond:Dim-reg}
  .}
  \label{fig:reg-b-c}
\end{figure}

To demonstrate significance of this analytical result we compare it with numerical simulations. We discuss the difficulties of numerical procedures for reliable estimation of the Lyapunov dimension and Lyapunov exponents along one randomly chosen trajectory over a long time interval.
A natural way to get reliable estimation of the Lyapunov dimension of attractor $\mathcal{A}$ is to localize the attractor $\mathcal{A}\subset \mathcal{C}$, to consider a grid of points $\mathcal{C}_{grid}$ on $\mathcal{C}$, and to find the maximum of the corresponding finite-time local Lyapunov dimensions for a certain time $t = T$. 
We show the grid of points $\mathcal{C}_{\rm grid}$
filling the basin of attraction:
the grid of points fills cuboid
$\mathcal{C} = [-27,27] \times [-65,65] \times [3,95]$ (containing the attractor)
rotated by 45 degrees around the $z$-axis,
with the distance between points equal to $0.5$ (see Fig.~\ref{fig:shapovalov:grid}).
The time interval considered is $[0,\, T=500]$ at the time points $t = t_k = \tau\,k$ $(k=1,\ldots,N)$, $N= 1000$
according to the time step $\tau = t_k - t_{k-1} = 0.5$, and the integration method is MATLAB ode45 with predefined parameters.
The infimum on the time interval
is computed at the points $\{t_k\}_{1}^{N}$.

For system \eqref{SysLorenz3} with parameters under consideration,
we use a MATLAB realization of \emph{the adaptive algorithm of the finite-time Lyapunov dimension
and Lyapunov exponents computation}~\cite{KuznetsovLMPS-2018}
and obtain the maximum of the finite-time local Lyapunov dimensions at the grid of points
\big($\displaystyle\max_{u \in \mathcal{C}_{\rm grid}} \dim_{\rm L}(t,u)$ is
computed for trajectories of system \eqref{SysLorenz3}
using MATLAB ode45 integration method with predefined parameters 
and with threshold parameter $\delta = 0.01$
for adaptively adjusting the number of SVD approximations\big).
For parameters $r = 51$, $b = 5.7$, $c = 18.3$
we get

\begin{equation}\label{num-dim}
\max_{u \in \mathcal{C}_{\rm grid}} \dim_{\rm L}(100,u)=2.0808, \\ 
 \max_{u \in \mathcal{C}_{\rm grid}} \dim_{\rm L}(500,u)=2.0792.
\end{equation}
Note that if conditions on dissipativiness of system \eqref{SysLorenz3} are not satisfied for a certain time, $t=t_k$ the computed trajectory is out of the cuboid,
the corresponding value of the finite-time local Lyapunov dimension
is not taken into account in the computation of the maximum
of the finite-time local Lyapunov dimension
(e.g. if there are trajectories with initial conditions in the cuboid,
which tend to infinity).

\begin{figure}[ht]
  \centering
  \includegraphics[width=0.5\linewidth]{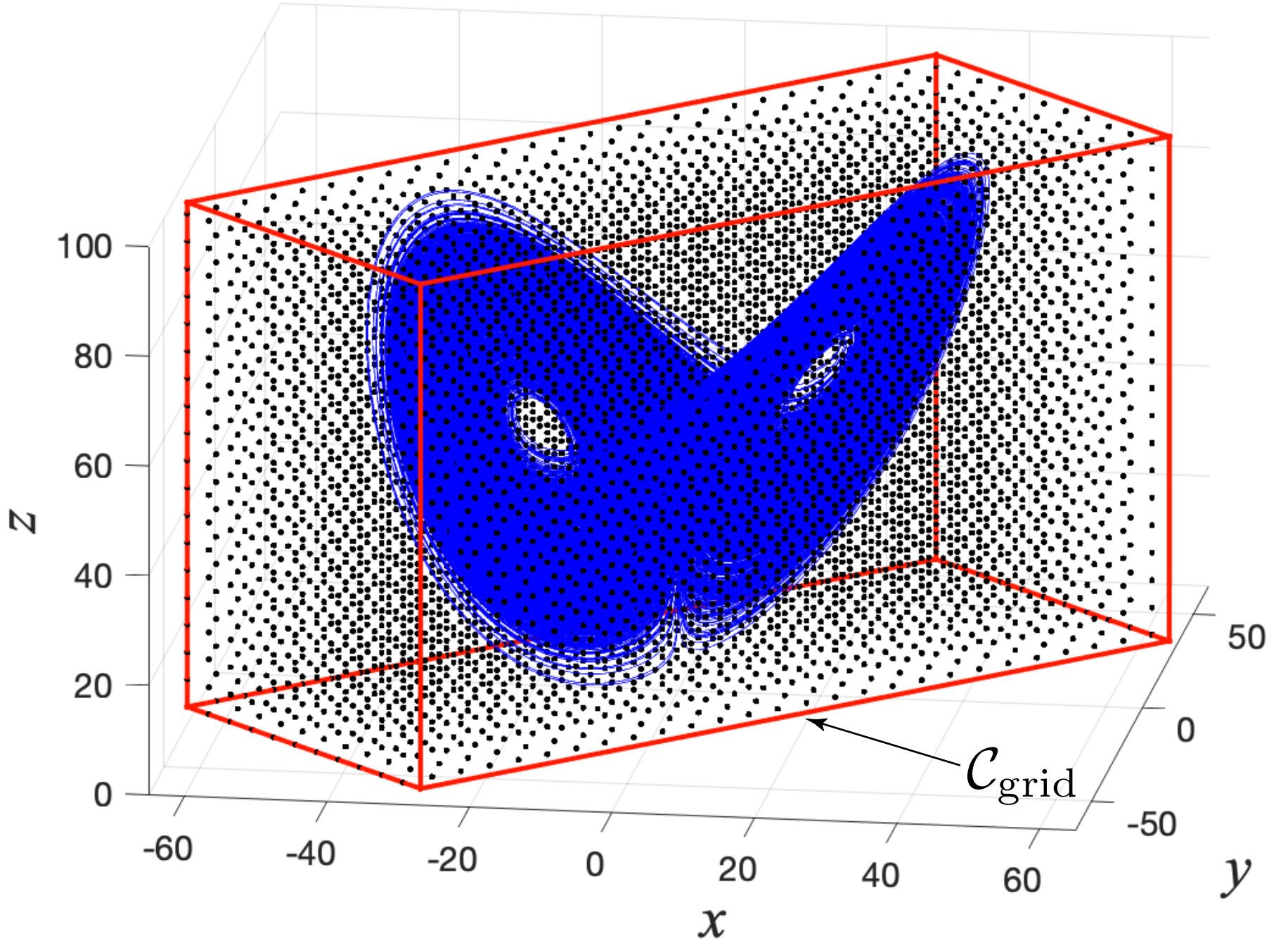}
  \caption{Numerical localization of the chaotic attractor of system~\eqref{SysLorenz3}
  with parameters set at $r = 51$, $b = 5.7$, $c = 18.3$ by the
  cuboid $\mathcal{C}$ and the
  corresponding grid of points $\mathcal{C}_{\rm grid}$.}
  \label{fig:shapovalov:grid}
\end{figure}

If the maximum of local Lyapunov dimensions on the global attractor, which involves all equilibria, is achieved at an equilibrium point: $\dim_{\rm L}(u_{eq}^{cr}) = \max_{u\in \mathcal{A}} \dim_{\rm L}(u)$, then this allows one to get analytical formula for the exact Lyapunov dimension \cite{DoeringGHN-1987}.	

The exact Lyapunov dimension $\dim_{\rm L}\!\mathcal{A}_{{}glob}=\dim_{\rm L}\!S_0=2.4347> \dim_{\rm L}\!\mathcal{A}\approx \max_{u \in \mathcal{C}_{\rm grid}}\!\dim_{\rm L}(t_k,u)\\
\approx 2.0808$ (see \eqref{num-dim}) obtained by formula \eqref{DimL} and the estimation \eqref{num-dim} are consistent with the hypothesis on the Lyapunov dimension of self-excited attractors. Using Theorem~\ref{theorem:shapovalov:stability} we can get the value of the exact Lyapunov dimension on the global attractor, which coincides with the Lyapunov dimension at a stationary (zero) point. This result is nontrivial since to compute reliably numerically the dimensions on the trajectories of the global attractor is extremely difficult. We demonstrate challenging nature of this task by the following examples.

Choosing the initial data somewhere in the phase space, we can obtain the values of the dimensions along the various trajectories  by a numerical procedure. Generally speaking, these values of the dimensions will also be different. 
For instance, system \eqref{SysLorenz3} has the analytical solution $u(t) = (0, 0, z_0 e^{-bt})$ which tends to the equilibrium $S_0 = (0, 0, 0)$ from any initial point $(0, 0, z_0) \in \mathbb{R}^{3}$. The existence of such solutions in the phase space complicates the procedure of visualization of a chaotic attractor (pseudo-attractor) by one pseudo-trajectory with arbitrary initial data computed for a sufficiently large time interval. 
In particular, the numerical computation of finite-time local Lyapunov exponents along this trajectory during any time interval does not lead to averaging of these values across the attractor, but to tending of these values to the finite-time local Lyapunov exponents of $S_0$.

The challenges of the finite-time Lyapunov dimension computation along the trajectories over large time intervals is connected with the existence of UPOs embedded into a chaotic attractor. The ``skeleton'' of a chaotic attractor for this system comprises embedded UPOs. Along with the existence of the analytical solution $u(t) = (0, 0, z_0 e^{-bt})$ the global attractor of system \eqref{SysLorenz3} contains a period-1 UPO.

Consider system \eqref{eq:ode}.
Let $u^{\rm upo}(t,u^{\rm upo_1}_0)$ be its UPO with period $\tau > 0$,
$u^{\rm upo}(t - \tau,u^{\rm upo_1}_0) = u^{\rm upo}(t,u^{\rm upo_1}_0)$,
and initial condition $u^{\rm upo_1}_0=u^{\rm upo}(0,u^{\rm upo_1}_0)$.
To compute the UPO, we add the unstable delayed feedback control (UDFC) \cite{Pyragas-1992} in the following form:
\begin{equation}\label{eq:closed_loop_syst}
 \begin{aligned}
\dot{u}(t) &= f(u(t)) - K B \, \big[F_N(t) + w(t)\big], \\
\dot{w}(t) &= \lambda_c^0 w(t) + (\lambda_c^0 - \lambda_c^\infty) F_N(t), \\
F_N(t) &= C^*u(t) - (1\!-\!R) \sum_{k=1}^N R^{k-1} C^*u(t - k T),
  \end{aligned}
\end{equation}
where
% $v(t) = C^*u(t)$ is a measurable scalar variable,
$0 \leq R < 1$ is an extended DFC parameter,
$N = 1,2,\ldots,\infty$ defines the number of previous states involved in
delayed feedback function $F_N(t)$, 
%(see Figs.~\ref{fig:shapovalov:stab:y2} and \ref{fig:shapovalov:stab:FN}),
$\lambda_c^0 > 0$, and $\lambda_c^\infty < 0$ are
additional unstable degree of freedom parameters,
$B, C$ are vectors and $K > 0$ is a feedback gain.
For the initial condition $u^{\rm upo_1}_0$ and $T = \tau$ we have
$F_N(t) \equiv 0, \ w(t) \equiv 0$,
and, thus, the solution of system \eqref{eq:closed_loop_syst}
coincides with the periodic solution
of the initial system \eqref{eq:ode}.

For system~\eqref{SysLorenz3} with parameters $r = 51$, $b = 5.7$, $c = 18.3$, 
using~\eqref{eq:closed_loop_syst} with
$B^{*} = \left(0, 1, 0\right)$, $C^* = \left(0, 1, 0\right)$,
$R = 0.7$, $N =100$, $K = 10$, $\lambda^0_c = 0.1$, $\lambda^\infty_c = -5$,
one can stabilize a period-1 UPO $u^{\rm upo_1}(t,u_0)$
with period $\tau_1 = 0.69804$
from the initial point $u_0 = (0.1, 0.1, 0.1)$, $w_0 = 0$ on the time interval $[0, 100]$ (see Fig.~\ref{fig:shapovalov:upo1attr}). 
\begin{figure}[ht]
\begin{center}
     %\label{fig:shapovalov:upo:upo1attr}
     \includegraphics[width=0.5\linewidth]{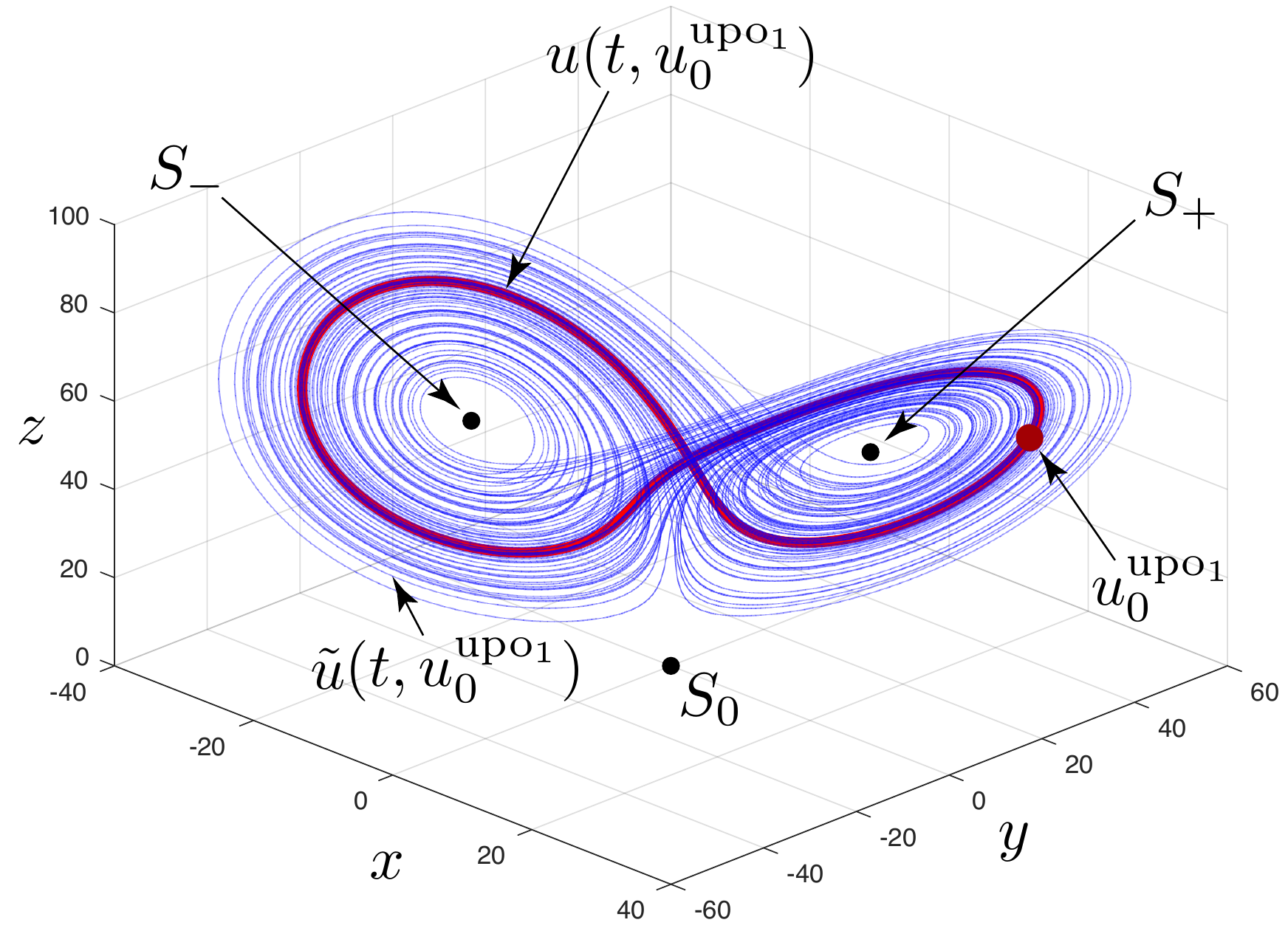}
    \caption{Period-1 UPO $u^{\rm upo_1}(t)$ (red, period $\tau_1 = 0.69804$)
    stabilized using the UDFC method,
    and pseudo-trajectory $\tilde{u}(t, u^{\rm upo_1}_0)$ (blue, $t \in [0,100]$) in system \eqref{SysLorenz3} with parameters set at $r = 51$, $b = 5.7
    $, $c = 8.3$. 
}
    \label{fig:shapovalov:upo1attr}
 \end{center}
 \end{figure}
We use the Pyragas procedure \cite{Pyragas-1992,Pyragas-2001} for numerical stabilization and visualization of UPOs.
For the initial point $u^{\rm upo_1}_0 \approx (29.6688, 26.1650, 73.8221)$
on the UPO $u^{\rm upo_{1}}(t) = u(t, u^{\rm upo_1}_0)$
we numerically compute the trajectory
of system \eqref{eq:closed_loop_syst} without the stabilization (i.e. with $K = 0$)
on the time interval $[0,T=100]$ (see Fig.~\ref{fig:shapovalov:upo1attr}).
We denote it by $\tilde{u}(t, u^{\rm upo_1}_0)$
to distinguish this pseudo-trajectory from the periodic orbit $u(t, u^{\rm upo_1}_0)$.
On the initial small time interval $[0,T_1 \approx 2 \tau_1]$,
even without the control,
the obtained trajectory $\tilde{u}(t, u^{\rm upo_1}_0)$
approximately traces the ''true'' trajectory (periodic orbit) $u(t, u^{\rm upo_1}_0)$.
But for $t > T_1$, without a control,
the pseudo-trajectory $\tilde{u}(t, u^{\rm upo_1}_0)$
diverges from $u(t, u^{\rm upo_1}_0)$ and
visualize a local chaotic attractor~$\mathcal{A}$.

In general, the closeness of the real trajectory $u(t,u_0)$
and the corresponding pseudo-trajectory $\tilde u(t,u_0)$
calculated numerically can be guaranteed on a limited short time interval only. The obtained values of the largest \emph{finite-time Lyapunov exponent} ${\rm LE}_1(t,u_{0}^{upo_1})$ computed along the stabilized UPO $u(t, u^{\rm upo_1}_0)$ and the trajectory without stabilization $\tilde{u}(t, u^{\rm upo_1}_0)$ gives us the following  results. On the initial part of the time interval $[0,T_1 \approx 2 \tau_1]$, one can indicate the coincidence of these values with a sufficiently high accuracy. 
After $t > T_2\approx 10$ the difference in values becomes significant and the corresponding graphs diverge in such a way that the graph corresponding to the unstabilized trajectory is higher than the parts of the graphs corresponding to the UPO and the analytical value largest Lyapunov exponent: ${\rm LE}_1(u_{0}^{upo_1})=1.80401$, computed via Floquet multipliers (see Fig.~\ref{fig:shapovalov:LE}). 

\begin{figure}[ht]
\begin{center}
    % \label{fig:shapovalov:LE}
     \includegraphics[width=0.5\linewidth]{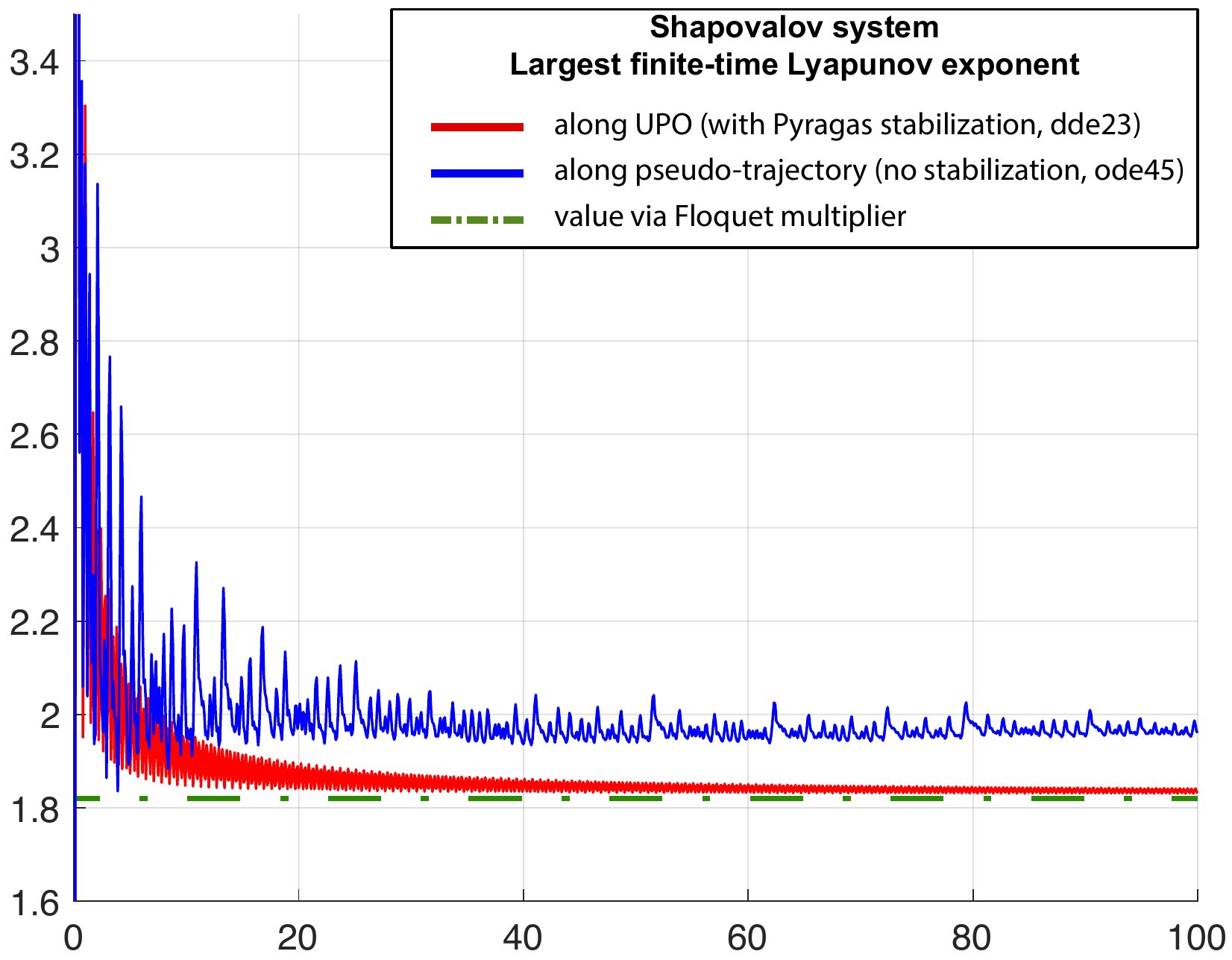}
    \caption{Period-1 UPO $u^{\rm upo_1}(t)$ (red, period $\tau_1 = 0.69804$)
    stabilized using the UDFC method,
    pseudo-trajectory $\tilde{u}(t, u^{\rm upo_1}_0)$ (blue), and the analytical value ${\rm LE}_1(u_{0}^{upo_1})$ (green) for $t \in [0,100]$ in system \eqref{SysLorenz3} with parameters set at $r = 51$, $b = 5.7$, $c = 18.3$.  }
    \label{fig:shapovalov:LE}
 \end{center}
 \end{figure}

Using numerical experiments, we analyze the chaotic dynamics of system \eqref{SysLorenz3} and visualize a self-excited attractor for values of parameters $r=51, b=5.7, c=18.3$. At the same time we get formula of the exact Lyapunov dimension of the global attractor for certain region of the parameters $(b,\, c,\, r)$  \eqref{Cond:AbsSet} and \eqref{Cond:Dim-r} of system \eqref{SysLorenz3}. Thus, we get the following relations 
\begin{equation}\label{ineq-dimL}
\dim_{\rm L}\mathcal{A}_{glob}=\dim_{\rm L}S_0=2.4347>\!\dim_{\rm L}\mathcal{A}\!\approx\!\max_{u \in \mathcal{C}_{\rm grid}}\!\dim_{\rm L}(t_k,u)\!\approx 2.0808\!\ge\!\dim_{\rm L}u^{upo_1}\!\approx 2.0738.
\end{equation}

\section*{Conclusion}

In this paper, we studied the irregular behavior (chaotic attractor, unstable limit cycles) of the mid-size firm model, assuming the deterministic endogenous mechanism for generating these fluctuations. Using an analytical approach, we calculated quantitative characteristics of irregular dynamics, such as the Lyapunov dimension and topological entropy, and demonstrated the complexity and ambiguity of using numerical procedures for calculating these indicators. We have obtained a number of new results. First, we proved a theorem about the exact formula for the Lyapunov dimension of the global attractor in the model. Similar way used for getting the formula of the topological entropy. Second, we identified an UPO for the model and stabilized it using the Pyragas control procedure. Third, we numerically calculated the finite-time Lyapunov dimension along the trajectories of the global attractor, including UPO, thereby providing support for arguments about difficulties of application of the numerical procedures and importance of the obtained exact formula for the Lyapunov dimension of the global attractor. 
We believe that expanding our knowledge of the role, sources, as well as qualitative and quantitative characteristics of irregular oscillatory dynamics may diminish researchers' reliance on unrealistically large shocks to explain economic data.

\section*{Acknowledgments}

This paper was prepared with the support by the Leading Scientific Schools of Russia: project NSh-2624.2020.1 (sections~3,4). 
Authors from the St.Petersburg State University
acknowledge support from St.Petersburg State University grant Pure ID 75207094 (section~1-2).

This work was motivated by research conducted at the Institute for Nonlinear Dynamical Inference at the International Center for Emerging Markets Research (http://icemr.ru/institute-for-nonlinear-dynamical-inference/) and Financial Research Institute of the Ministry of Finance of the Russian Federation,
a number of whose employees the authors thank for for helpful suggestions and comments.
Especially, we thank William A. Barnett, with whom the authors began work on the paper \cite{AlexeevaBKM-CSF-2020}, for his extremely valuable comments and support.

% \bibliographystyle{rusnat}

%\bibliographystyle{apalike}
%\bibliographystyle{unsrtnat}

%\bibliographystyle{elsarticle-num}
%\bibliography{bib_full-rus-cp1251,bib_hidden}

% \bibliography{bib_leonov,bib_nk,bib_full,bib_economics,bib_ta,bib_shap}

\newpage

\section*{Appendix}

% For system \eqref{eq:ode} the evolutionary operator $\varphi^t(u_0) = u(t,u_0)$
% defines a smooth \emph{dynamical system}  $\{\varphi^t\}_{t\geq0}$
% in the phase space $(U, ||~\cdot~||)$:
% $\big(\{\varphi^t\}_{t\geq0},(U \subseteq \mathbb{R}^3,||\cdot||) \big)$,
% with Euclidean norm.
Let $\{\varphi^t\}_{t\geq0}$ denote a smooth \emph{dynamical system} with continuous time, and let set $\mathcal{A}$ be its compact invariant set.
Fundamental matrix $D\varphi^t(u)=\big(y^1(t),y^2(t),y^3(t)\big)$, $D\varphi^0(u) = I$ consists of linearly independent solutions $\{y^i(t)\}_{i=1}^{3}$ of the linearized system, where $I$ is the unit $3\times 3$ matrix with the following cocycle property:
\begin{equation}\label{cocycle}
  D\varphi^{t+s}(u) = D\varphi^t\big(\varphi^s(u)\big)D\varphi^{s}(u),
  \ \forall t,s \geq 0, \ \forall u \in U.
\end{equation}

Let $\LEs_i(\cdot) = t^{-1} \ln\sigma_i(\cdot)$ for $t>0$, where $\sigma_i(D\varphi^t(u))=\sigma_i(t,u)$, $i=1,2,3$, be the singular values of $D\varphi^t(u)$
(i.e. $\sigma_i(t,u)>0$ and $\sigma_i(t,u)^2$ are the eigenvalues of the symmetric matrix $D\varphi^t(u)^*D\varphi^t(u)$ with respect to their algebraic multiplicity), ordered so that
$\sigma_1(t,u)\geq \sigma_2(t,u) \geq \sigma_3(t,u) > 0$ for any $u \in U$, $t\geq0$.
Consider a set of \emph{finite-time Lyapunov exponents}
$\{\LEs_i(D\varphi^t(u))=\LEs_i(t,u)\}_{i=1}^3$ at the point $u$:
\begin{equation}\label{ftLE}
  \LEs_i(t,u) = \frac{1}{t}\ln\sigma_i(t,u), \ t > 0, \quad i=1,2,3,
\end{equation}
ordered by decreasing for all $t>0$. 
We can introduce the following concepts -- \emph{finite-time local Lyapunov dimension}
(of map $\varphi^t$ at point $u$):
$\dim_{\rm L}(t,u) = \dim_{\rm L}(\varphi^t, u)$,
the \emph{finite-time Lyapunov dimension} (of map $\varphi^t$
with respect to set $\mathcal{A}$):
$\dim_{\rm L}(t,\mathcal{A}) = \dim_{\rm L}(\varphi^t, \mathcal{A})$,
and for the \emph{Lyapunov dimension}
(of dynamical system $\{\varphi^t\}_{t\geq0}$
with respect to set $\mathcal{A}$):
$\dim_{\rm L}\mathcal{A} = \dim_{\rm L}(\{\varphi^t\}_{t \geq0}, \mathcal{A})$.

Consider the dynamical system
$\big(\{\varphi^t\}_{t\geq0},(U\subseteq \mathbb{R}^3,||\cdot||) \big)$
under the change of coordinates $w = h(u)$,
where $h: U \subseteq \mathbb{R}^3 \to \mathbb{R}^3$ is a diffeomorphism.
In this case the dynamical system
$\big(\{\varphi^t\}_{t\geq0},(U\subseteq \mathbb{R}^3,||\cdot||) \big)$
is transformed to the dynamical system
$\big(\{\varphi_h^t\}_{t\geq0},(h(U)\subseteq \mathbb{R}^3,||\cdot||) \big)$,
and the compact set $\mathcal{A} \subset U$
invariant with respect to $\{\varphi^t\}_{t\geq0}$
is mapped to the compact set $h(\mathcal{A}) \subset h(U)$.
Here
\begin{equation}\label{Dphih}
  D\varphi_h^t(w)=
  Dh(\varphi^t(u))
  D\varphi^t(u)
  \big(Dh(u)\big)^{-1}.
\end{equation}

\begin{proposition}\label{thm:dDOunderdiff}\label{thm:hdiffLE} (see, e.g. \cite{Kuznetsov-2016-PLA,KuznetsovAL-2016})
For any diffeomorphism $h: U \subseteq \mathbb{R}^3 \to \mathbb{R}^3$
the Lyapunov dimension is invariant
with respect to diffeomorphism, i.e.
\begin{equation}\label{dDOunderdiff}
  \dim_{\rm L}(\{\varphi^t\}_{t\geq0},\mathcal{A})
  =
  \dim_{\rm L}(\{\varphi_h^t\}_{t\geq0},h(\mathcal{A})).
\end{equation}
\end{proposition}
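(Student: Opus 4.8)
The plan is to exploit the transformation law \eqref{Dphih} together with the compactness of the invariant set $\mathcal{A}$. First I would note that, because $h$ is a diffeomorphism and $\mathcal{A}$ is compact and invariant (so $\varphi^t(u)\in\mathcal{A}$ whenever $u\in\mathcal{A}$), the Jacobian $Dh$ and its inverse $(Dh)^{-1}$ are continuous and nonsingular on $\mathcal{A}$; hence there exist constants $0<m\le M<\infty$ with $m\le\sigma_3(Dh(v))\le\sigma_1(Dh(v))\le M$ for all $v\in\mathcal{A}$. It is exactly this two-sided bound that renders the conjugating factors $Dh(\varphi^t(u))$ and $(Dh(u))^{-1}$ in \eqref{Dphih} negligible in the long-time limit, even though at finite $t$ they genuinely alter the singular values.

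Next I would estimate the ordered singular values via the Horn inequality \cite[p.50]{KuznetsovR-2021-book}, which the paper already employs. Writing $D\varphi_h^t(w)=Dh(\varphi^t(u))\,D\varphi^t(u)\,(Dh(u))^{-1}$ and applying Horn's inequality to this triple product, together with the dual relation $D\varphi^t(u)=(Dh(\varphi^t(u)))^{-1}D\varphi_h^t(w)\,Dh(u)$, gives for every $k\in\{1,2,3\}$
\[
\left(\frac{m}{M}\right)^{k}\prod_{i=1}^{k}\sigma_i(D\varphi^t(u))
\le
\prod_{i=1}^{k}\sigma_i(D\varphi_h^t(w))
\le
\left(\frac{M}{m}\right)^{k}\prod_{i=1}^{k}\sigma_i(D\varphi^t(u)).
\]
Taking $\tfrac1t\ln(\cdot)$ and invoking the definition \eqref{ftLE} of the finite-time Lyapunov exponents, this turns into the uniform estimate
\[
\left|\sum_{i=1}^{k}\LEs_i(D\varphi_h^t(w))-\sum_{i=1}^{k}\LEs_i(D\varphi^t(u))\right|\le\frac{k\ln(M/m)}{t},
\]
valid for all $u\in\mathcal{A}$ and all $t>0$. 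Thus the ordered partial sums of the finite-time Lyapunov exponents of the two cocycles coincide up to an $O(1/t)$ error that is uniform over the set.

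Then I would pass to the Lyapunov dimension through the characterization \eqref{defLD}, $\dim_{\rm L}\mathcal{A}=\liminf_{t\to+\infty}\sup_{u\in\mathcal{A}}\dim_{\rm L}(t,u)$. The Kaplan--Yorke value \eqref{lftKY}--\eqref{dimsum} is determined by the finite-time exponents only through their ordered partial sums (the integer index $j$ and the fraction $s$ are fixed by the partial-sum threshold and by the ratio appearing in \eqref{dimsum}). Feeding the uniform $O(1/t)$ bound of the previous step into this formula shows that $\sup_{u\in\mathcal{A}}\dim_{\rm L}(t,u)$ computed for $\{\varphi_h^t\}$ on $h(\mathcal{A})$ and for $\{\varphi^t\}$ on $\mathcal{A}$ differ by $O(1/t)$; passing to the $\liminf$ as $t\to+\infty$ kills this difference and yields \eqref{dDOunderdiff}.

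The main obstacle I anticipate is conceptual rather than computational: the finite-time local dimension $\dim_{\rm L}(t,u)$ at a fixed $t$ is \emph{not} invariant under $h$, so the whole argument must be conducted at the level of the limit in \eqref{defLD} and never claimed pointwise in $t$. Two technical points then need care. First, the estimate must be uniform in $u\in\mathcal{A}$ so that it survives the supremum; this is why the compactness of $\mathcal{A}$ and the invariance used to bound $Dh(\varphi^t(u))$ are indispensable. Second, the map from partial sums to the Kaplan--Yorke value can jump when the critical exponent $\LEs_{j+1}$ crosses zero, so I would have to verify that at the limiting configuration this exponent stays strictly negative, i.e.\ the denominator $|\LEs_{j(t,u)+1}(t,u)|$ in \eqref{lftKY} is bounded away from $0$. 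Under that nondegeneracy the dimension is a locally Lipschitz function of the partial sums, and the $O(1/t)$ perturbation of the sums produces only an $O(1/t)$ perturbation of the dimension, which is precisely what the $\liminf$ requires.
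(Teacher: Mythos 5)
Your proposal is correct and follows essentially the same route as the paper, which proves the proposition by applying the Horn inequality to the factorization \eqref{Dphih} and using the uniform boundedness (by compactness of $\mathcal{A}$) of the singular values of $Dh(\varphi^t(u))$ and $(Dh(\varphi^t(u)))^{-1}$ to obtain an $O(1/t)$ perturbation that vanishes in the limit \eqref{defLD}. You simply spell out in detail the two ingredients the paper states in one sentence, including the correct caveat that the finite-time dimension itself is not invariant pointwise in $t$.
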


The proof of this proposition uses the Horn inequality for \eqref{Dphih}
and the fact that
singular values of $Dh(\varphi^t(u))$ and $(Dh(\varphi^t(u)))^{-1}$
are uniformly bounded in $t$ on $\mathcal{A}$.
%(because $Dh$ and $(Dh)^{-1}$ are continuous, $\mathcal{A}$ is an invariant compact set).
Moreover, instead of $Dh$ one can consider any $\,3\times 3 $ matrix $H(u)$, such that all its elements are scalar continuous functions of $u$ and $\det H(u) \neq 0$ for all $u \in \mathcal{A}$, and get\footnote{By the Horn inequality for the matrices $D_H(\varphi^t(u))=
  H(\varphi^t(u))D\varphi^t(u) H(u)^{-1}$ and $D\varphi^t(u)
  =H(\varphi^t(u))^{-1}  D_H(\varphi^t(u))H(u).$}
 
\begin{equation}\label{dDOunderH}
\begin{aligned}
  \lim\limits_{t \to +\infty}
  \bigg( \LEs_i\big( H(\varphi^t(u))D\varphi^t(u)\big(H(u)\big)^{-1} \big) -
  \LEs_i\big(D\varphi^t(u)\big)\bigg)=0,
  \quad \quad i=1,2,3, \\
  \dim_{\rm L}(\{\varphi^t\}_{t\geq0},\mathcal{A})
  =
  \liminf_{t \to +\infty}\sup\limits_{u \in \mathcal{A}}d^{\rm KY}\big(
  \{\LEs_i\big( H(\varphi^t(u))D\varphi^t(u)\big(H(u)\big)^{-1}\}_1^3 \big) \big).
\end{aligned}
\end{equation}
%\end{remark}

%\begin{remark} \rm
If an equilibrium $u_{eq} \! \equiv \! \varphi(u_{eq}) \! \in \! \mathcal{A}$ has
simple real eigenvalues, then a nonsingular $3~\times~3$ matrix $S$ exists
such that the linearisation takes the form
\(  SDf(u_{eq})S^{-1}
    %SJ(u_{eq})S^{-1}
    ={\rm diag}\big(\lambda_1(u_{eq}),\cdots\!,\lambda_3(u_{eq})\big)
\),
where  $\lambda_{j}(u_{eq}) \geq \lambda_{j+1}(u_{eq})$, $i=1,2$. 
Then, by the linear change of variables $w=h(u)=Su$ %(with $Dh(u)=H(u)=S$)
and the invariance we get
$\lim\limits_{t\to +\infty}\LEs_i(t,u_{eq}) = \lambda_{i}(u_{eq})$
%(or $=\ln|\lambda_{i}(u_{eq})|$ for discrete-time dynamical systems)
and
$\dim_{\rm L}u_{eq} = d^{\rm KY}(\{ \lambda_{i}(u_{eq})) \}_{i=1}^{3}$.
%(or $d^{\rm KY}(\{ \ln|\lambda_{i}(u_{eq})| \}_{i=1}^{3})$ for discrete-time dynamical systems).
%\end{remark}

For analytical estimation of the Lyapunov dimension via the eigenvalues of the symmetrized Jacobian matrix we use the generalized Liouville's relation
(see, e.g., \cite{Smith-1986},\cite[p.68]{KuznetsovR-2021-book})
and get, $\forall t>0, \ u \in \mathcal{A}$, the following:
\begin{equation}\label{LEviaJ}
 \sum\limits_{i=1}^{j}\LEs_i(\varphi^t(u))\!+\!s\LEs_{j+1}(\varphi^t(u))
 \!\leq\! %\sup\limits_{u \in \mathcal{A}}
 \frac{1}{t}\!\int\limits_0^t\!\sum\limits_{i=1}^{j}\nu_i(\varphi^\tau(u))
 \!+\!s\nu_{j+1}(\varphi^\tau(u)) d\tau
 \!\leq\!
 \sup\limits_{u \in \mathcal{A}} \sum\limits_{i=1}^{j}\nu_i(u)\!+\!s\nu_{j+1}(u).\\
\end{equation}
From \eqref{LEviaJ} we obtain the upper estimation of the Lyapunov dimension \eqref{dimestviaJ}.

\emph{The Leonov method} of analytical estimation of the Lyapunov dimension is based on \eqref{dDOunderH} and \eqref{dimestviaJ}. Following \cite{Leonov-1991-Vest,LeonovL-1993,Leonov-2002}, we consider
$H(u)=p(u)S$, where $p: U \subseteq \mathbb{R}^3 \to \mathbb{R}^1$
is a continuous scalar function, $S$ is a nonsingular $3\times 3$ matrix. Then we compute the Lyapunov dimension by~\eqref{dDOunderH}:
\[
  \dim_{\rm L}\mathcal{A}
    =
  \liminf_{t \to +\infty}\sup\limits_{u \in \mathcal{A}}d^{\rm KY}\big(
  \{ \LEs_i\big(p(\varphi^t(u))p(u)^{-1}\,SD\varphi^t(u)S^{-1}\big)\}_1^3 \big),
\]
and estimate it by \eqref{dimestviaJ}.
For that
% $
%   \LEs_i\big(p(\varphi^t(u))\big(p(u)\big)^{-1} (\cdot)\big) =
%   \frac{1}{t}\ln \big(p(\varphi^t(u))p(u)^{-1}\big)
%   + \LEs_i(\cdot)
% $
by \eqref{ftLE} and \eqref{LEviaJ} we get the estimation:
\begin{equation}\label{partsumLE}
  \sum\limits_{i=1}^{j}
  \LEs_i\big(p(\varphi^t(u))p(u)^{-1}SD\varphi^t(u)S^{-1}\big)
 \leq j\frac{1}{t}\ln \big(p(\varphi^t(u))p(u)^{-1}\big)
  +\frac{1}{t}\int_0^t\sum\limits_{i=1}^{j}\nu_i(SJ(u)S^{-1}) d\tau.
  %\quad \forall t>0, \ u \in \mathcal{A}.
\end{equation}
In general, while
under the diffeomorphism $h(u)=Su$
the Lyapunov dimension is invariant
and $J(u) \to SJ(u)S^{-1}$, the values
$\nu_i(SJ(u)S^{-1})=\nu_i(u,S)$
are not invariant and, thus, $S$ together with $p(u)$
may be used to simplify their computation
(the idea with $S$ was introduced in
\cite[eq.(8)]{LeonovL-1993} and $p(u)$ was introduced in \cite{Leonov-1991-Vest}).
The scalar multiplier of the type $p(\varphi^t(u))(p(u))^{-1}$ can be interpreted as the changes of Riemannian metrics \cite{NoackR-1996} (see, also \cite{KuznetsovR-2021-book}).
The following theorem is a reformulation
of the results from \cite{Leonov-2002,Leonov-2012-PMM}
(see also \cite{Kuznetsov-2016-PLA,KuznetsovR-2021-book}).

\begin{theorem}\label{thm:LD-estimate-V}
If there exist an integer $j \in \{1,2\}$,
a real $s \in [0,1]$, a differentiable scalar function $V: U \subseteq \mathbb{R}^3 \to \mathbb{R}^1$, and a nonsingular $3\times 3$ matrix $S$
such that
\begin{equation}\label{ineq:weilSVct}
  \sup_{u \in \mathcal{A}} \big(
  \nu_1 (u,S) + \cdots + \nu_j (u,S)
  + s\nu_{j+1}(u,S) + \dot{V}(u)
  \big) < 0,
\end{equation}
where $\dot{V} (u) = ({\rm grad}(V))^{*}f(u)$,
then
$
   \dim_{\rm H}\mathcal{A} \leq
    \dim_{\rm L}\mathcal{A}
   < j+s.
$
\end{theorem}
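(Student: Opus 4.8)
The plan is to deduce the bound from the Douady--Oesterl\'{e} theorem \cite{DouadyO-1980} after passing to the transformed cocycle generated by the metric change $H(u)=p(u)\,S$, where the scalar factor $p$ encodes the Lyapunov function $V$. Concretely, I would set
\[
  p(u)=\exp\!\Big(\tfrac{1}{j+s}\,V(u)\Big),
\]
so that $H(u)$ is a $3\times 3$ matrix with continuous entries and $\det H(u)=p(u)^3\det S\neq 0$ on $\mathcal{A}$. This is exactly the setting of \eqref{dDOunderH} (see Proposition~\ref{thm:hdiffLE}), hence $\dim_{\rm L}\mathcal{A}$ is unchanged and may be evaluated from the finite-time Lyapunov exponents of $H(\varphi^t(u))D\varphi^t(u)H(u)^{-1}$.

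The second step is to bound the $(j,s)$-weighted partial sum of these exponents. Since $p$ is scalar, multiplication by $p(\varphi^t(u))p(u)^{-1}$ shifts every $\LEs_i$ by the common amount $\tfrac1t\ln\!\big(p(\varphi^t(u))p(u)^{-1}\big)$, which in the weighted sum contributes $(j+s)$ times that quantity, i.e.\ exactly $\tfrac1t\big(V(\varphi^t(u))-V(u)\big)=\tfrac1t\int_0^t\dot V(\varphi^\tau(u))\,d\tau$ by the choice of $p$. Combining this with the generalized Liouville estimate \eqref{LEviaJ} applied to $SJ(u)S^{-1}$, in the form \eqref{partsumLE}, yields
\[
\begin{aligned}
  \sum_{i=1}^{j}\LEs_i\big(H(\varphi^t(u))D\varphi^t(u)H(u)^{-1}\big)
  &+s\,\LEs_{j+1}\big(H(\varphi^t(u))D\varphi^t(u)H(u)^{-1}\big)\\
  &\le \frac{1}{t}\int_0^t\!\Big(\nu_1(\varphi^\tau(u),S)+\cdots+\nu_j(\varphi^\tau(u),S)+s\,\nu_{j+1}(\varphi^\tau(u),S)+\dot V(\varphi^\tau(u))\Big)\,d\tau .
\end{aligned}
\]
Because $\mathcal{A}$ is invariant, the whole orbit $\varphi^\tau(u)$ stays in $\mathcal{A}$, so each integrand value is bounded by the supremum in the hypothesis \eqref{ineq:weilSVct}; by compactness of $\mathcal{A}$ and continuity this supremum is attained and, being strictly negative, equals some $-\kappa$ with $\kappa>0$. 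Hence the right-hand side is $\le-\kappa$ for \emph{every} $t>0$ and $u\in\mathcal{A}$.

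Writing $d=j+s$ and $\omega_d=\sigma_1\cdots\sigma_j\,\sigma_{j+1}^{\,s}$ for the singular-value product of the transformed cocycle, the previous bound reads $\tfrac1t\ln\omega_d\le-\kappa$, i.e.\ $\sup_{u\in\mathcal{A}}\omega_d\big(H(\varphi^t(u))D\varphi^t(u)H(u)^{-1}\big)\le e^{-\kappa t}<1$ for each fixed $t$. The Douady--Oesterl\'{e} theorem \cite{DouadyO-1980} then gives $\dim_{\rm H}\mathcal{A}\le\dim_{\rm L}\mathcal{A}\le d$. To upgrade this to the strict inequality I would exploit the exponential margin: for small $\varepsilon>0$ set $d'=d-\varepsilon$, so that $\omega_{d'}\le\omega_d\,e^{\varepsilon M t}\le e^{-(\kappa-\varepsilon M)t}$, where $M=\sup\{|\LEs_i(t,u)|:1\le i\le 3,\ u\in\mathcal{A},\ t>0\}<\infty$ since $f$ is $C^1$ and $\mathcal{A}$ is compact. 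Choosing $\varepsilon<\kappa/M$ makes $\sup_{u}\omega_{d'}<1$, whence $\dim_{\rm L}\mathcal{A}\le d'=j+s-\varepsilon<j+s$ by a second application of \cite{DouadyO-1980}.

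The main obstacle is the bookkeeping in the second step: one must verify that the scalar term from $p$ recombines \emph{exactly} (not merely asymptotically) with the $\nu_i$--integral, so that the \emph{pointwise} hypothesis \eqref{ineq:weilSVct} transfers, via invariance, into the \emph{uniform-in-$t$} decay $\omega_d\le e^{-\kappa t}$ of the transformed cocycle. This hinges on the Horn-inequality-based invariance \eqref{dDOunderH} being applicable to the non-diffeomorphic metric factor $p(u)S$, and on the finite-time exponents being uniformly bounded over $\mathcal{A}$; once these are in place, the exponential margin makes the strict inequality $\dim_{\rm L}\mathcal{A}<j+s$ almost automatic.
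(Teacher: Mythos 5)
Your proposal is correct and follows essentially the same route as the paper: the same choice $p(u)=e^{V(u)/(j+s)}$, the same recombination of the scalar factor with $\dot V$ via the generalized Liouville/Horn estimates \eqref{partsumLE}--\eqref{LEviaJ}, and the same use of invariance of $\mathcal{A}$ to bound the weighted sum by the negative supremum in \eqref{ineq:weilSVct}. The only (harmless) divergence is in the finishing step: the paper lets $t\to+\infty$ so that the scalar correction vanishes and then invokes \eqref{dimsum}, whereas you keep $t$ fixed, apply the Douady--Oesterl\'{e} bound to the transformed cocycle, and extract strictness from the exponential margin via a small $\varepsilon$-perturbation of $d=j+s$.
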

\begin{proof}
Let $p(u) = e^{V(u)(j+s)^{-1}}$. Then
\(
  (j+s)\frac{1}{t}\ln\big(p(\varphi^t(u))p(u)^{-1}\big)
  =
  %   \frac{1}{t}\big( V(\varphi^t(u))-V(u)\big)=
  \frac{1}{t}\left( \int_{0}^{t} \dot V(\varphi^\tau(u)) d\tau \right).
\)
Thus by invariance of $\mathcal{A}$ and \eqref{LEviaJ} from \eqref{partsumLE}
we get
\begin{equation}\label{weiladd}
\begin{aligned}
 \sum\limits_{i=1}^{j}\LEs_i(SD\varphi^t(u)S^{-1})+ s\LEs_{j+1}(SD\varphi^t(u)S^{-1} )
 +(j+s)\frac{1}{t}\ln\big( p(\varphi^t(u))p(u)^{-1} \big)\leq \\
 \leq \sup_{u \in \mathcal{A}} \bigg( \sum\limits_{i=1}^{j}\nu_i(u,S)
 +s\nu_{j+1}(u,S) + \dot V(u) \bigg)<0.
\end{aligned}
\end{equation}
 Since  %\begin{equation}\label{limpzero}
 \(
   \lim\limits_{t \to +\infty}(j+s)\frac{1}{t}\ln\big(p(\varphi^t(u))p(u)^{-1}\big)
   =0
\) for any $u \in \mathcal{A}$
%\end{equation}
there exists %sufficiently large
  $T>0$ such that
\begin{equation}\label{weil}
 \sum\limits_{i=1}^{j}\LEs_i(SD\varphi^t(u)S^{-1}) +
 s\LEs_{j+1}(SD\varphi^t(u)S^{-1}) <0 \quad \forall t>T, \ u \in \mathcal{A}.
\end{equation}
Thus, taking into account \eqref{dimsum}, $\dim_{\rm L}\mathcal{A}<j+s$.
%$\blacksquare$
\end{proof}

\end{document}